\newtheorem{Theorem}{Theorem}
\newtheorem{proposition}[Theorem]{Proposition}
\newtheorem{remark}{Remark}
\newcommand{\onevec}{{\bf{1}}}
\newcommand{\define}{\stackrel{\triangle}{=}}
\newcommand{\be}{\begin{equation}}
\newcommand{\ee}{\end{equation}}
\newcommand{\beqna}{\begin{eqnarray}}
\newcommand{\eeqna}{\end{eqnarray}}
\begin{document}
\title{Bayesian Methods for Multiple Change-Point Detection with Reduced Communication}
\author{Eyal~Nitzan,~\IEEEmembership{Member,~IEEE,}
Topi~Halme,~\IEEEmembership{Student~Member,~IEEE,}
and~Visa~Koivunen,~\IEEEmembership{Fellow,~IEEE}
\thanks{This work was partly supported by the Academy of Finland projects: (1) Statistical Signal Processing Theory and Computational Methods for Large Scale Data Analysis (2) WiFiUS project: Secure Inference in the Internet of Things.}
\thanks{{\footnotesize{E. Nitzan, T. Halme, and Visa Koivunen are with the Department of Signal Processing and Acoustics, Aalto University, Espoo, Finland, e-mail: eyal.nitzan@aalto.fi, topi.halme@aalto.fi, and visa.koivunen@aalto.fi.}}}
}

\maketitle
\nopagebreak

\begin{abstract}
In many modern applications, large-scale sensor networks are used to perform statistical inference tasks. In this paper, we propose Bayesian methods for multiple change-point detection using a sensor network in which a fusion center (FC) can receive a data stream from each sensor. Due to communication limitations, the FC monitors only a subset of the sensors at each time slot. Since the number of change points can be high, we adopt the false discovery rate (FDR) criterion for controlling the rate of false alarms, while minimizing the average detection delay (ADD). We propose two Bayesian detection procedures that handle the communication limitations by monitoring the subset of the sensors with the highest posterior probabilities of change points having occurred. This monitoring policy aims to minimize the delay between the occurrence of each change point and its declaration using the corresponding posterior probabilities. One of the proposed procedures is more conservative than the second one in terms of having lower FDR at the expense of higher ADD. It is analytically shown that both procedures control the FDR under a specified tolerated level and are also scalable in the sense that they attain an ADD that does not increase asymptotically with the number of sensors. In addition, it is demonstrated that the proposed detection procedures are useful for trading off between reduced ADD and reduced average number of observations drawn until discovery. Numerical simulations are conducted for validating the analytical results and for demonstrating the properties of the proposed procedures.
\end{abstract}

\begin{IEEEkeywords}
Sensor networks, Bayesian multiple change-point detection, communication limitations, average detection delay, false discovery rate  
\end{IEEEkeywords}

\section{Introduction}\label{sec:intro}
Large-scale sensor networks are prominent new tools in various applications, e.g. Internet of Things (IoT), cyber-physical systems such as power grids, environmental monitoring, and wireless communication. These sensor networks can be used to perform statistical inference tasks \cite{POOR_ONE_SHOT,CHAUDHARI,CHEN_ZHANG_POOR_NON_BAYESIAN,CHEN_ZHANG_POOR_BAYESIAN}. An important statistical inference problem is sequential change-point detection \cite{PAGE,SHIRYAEV_OPTIMUM,LORDEN,POLLAK,MOUSTAKIDES,LAI,POOR_QUICKEST,TARTAKOVSKY_GENERAL,TARTAKOVSKY_ASIMPTOTIC} in which one is interested in detecting a rapid change in the underlying probability model, anomaly or adversarial activity as quickly as possible subject to a false positive constraint. Sensor networks, where each sensor observes a different data stream and communicates with a fusion center (FC) or cloud, can be deployed to detect multiple change points in a monitored environment.\\
\indent
Multiple change-point detection is closely related to multiple hypothesis testing. A widely-used performance criterion in multiple hypothesis testing is the false discovery rate (FDR), where the FDR is the expected proportion of the number of false discoveries among all discoveries \cite{BENJAMINI_HOCHBERG,EFRON_BOOK,FARCOMENI}. FDR control for multiple change-point detection has been considered in \cite{CHEN_ZHANG_POOR_NON_BAYESIAN} and \cite{CHEN_ZHANG_POOR_BAYESIAN,CHEN_ZHANG_POOR_FAULT} in the deterministic and Bayesian frameworks, respectively. These works assumed that {\em all} data streams are observed in parallel, which may not be feasible in large-scale sensor networks used in IoT. In the context of change-point detection, a Type I error (false positive) occurs if the detection procedure declares a change before the true change actually happens. In general, one would be interested in detecting the change point with minimum possible delay, while controlling the Type I error rate \cite{SHIRYAEV_OPTIMUM,TARTAKOVSKY_GENERAL}. In the Bayesian framework, the posterior probability of a change point having occurred, or some variation of it, is a commonly used test statistic \cite{TARTAKOVSKY_GENERAL,POOR_QUICKEST}.\\
\indent
Several works have considered discrete time single change-point detection in which only a part of the observations is available. In \cite{KUMAR_SLEEP}, Bayesian change-point detection was considered by monitoring only a minimal number of sensors at each time slot, where the change detection problem was modeled as a Markov decision process. A Bayesian method to minimize the average detection delay (ADD) subject to constraints on both the probability of false alarm and the observation cost was proposed in \cite{BANERJEE_VEERAVALLI}, where an on-off observation control policy was selected along with the stopping time at which the change is declared. Deterministic versions of this work were developed in \cite{BANERJEE_VEERAVALLI_MINIMAX,BANERJEE_VEERAVALLI_COMPOSITE,BANERJEE_VEERAVALLI_DECENTRALIZED} under different settings. Deterministic change-point detection for high-dimensional data with missing elements was considered in \cite{XIE_MISSING}. In \cite{GENG_LAI_NB} and \cite{GENG_BAYRAKTAR}, quickest change detection problems with sampling right constraints were considered in the deterministic and the Bayesian frameworks, respectively. Quickest deterministic change-point detection with observation scheduling was considered in \cite{REN_JOHANSSON_SCHEDULING}, where the decision maker chooses one of two different sequences of observations at each time slot. In \cite{REN_JOHANSSON_CENSORING}, deterministic change-point detection in sensor network with communication rate constraints was studied and adaptive censoring strategies were developed for the sensors. Quickest deterministic change-point detection over multiple data streams was considered in \cite{GENG_LAI_CONF}, where the observer can only observe one data stream at each time slot.\\
\indent
In this paper, we consider the problem of rapidly detecting change points in multiple data streams \cite{CHEN_ZHANG_POOR_NON_BAYESIAN,CHEN_ZHANG_POOR_BAYESIAN}. In particular, an FC receives statistically independent data streams from multiple sensors in a large-scale sensor network. Due to communication limitations, at a given time slot the FC monitors only a subset of the active data streams for which change points have not been declared yet. The subset size has a fixed proportion with respect to (w.r.t.) the number of active data streams. We assume that each data stream has an associated random change point.\\
\indent
The contributions of this paper are:
\begin{enumerate}
\item A Bayesian sequential procedure, named the sequential maximum a-posteriori probability (S-MAP) procedure, is proposed. This procedure detects the change points in all of the data streams, while controlling the FDR. The proposed procedure is based on sequentially updating the sensors' posterior probabilities of change points having occurred. Then, at each time slot we choose to monitor a subset of the sensors with the highest posterior probabilities within the allowed proportion. This approach aims to minimize the time between change point occurrence and its declaration by monitoring the sensors for which change-point occurrence is most probable given the data. The S-MAP procedure uses the same Type I error constraints as in \cite{CHEN_ZHANG_POOR_BAYESIAN} and extends this work to communication constrained scenarios. The FDR control of the S-MAP procedure is established using analytical tools.
\item We develop an improved S-MAP (IS-MAP) procedure that is less conservative than the S-MAP procedure in the sense that it has a lower ADD but higher FDR than the S-MAP procedure. The decrease of the ADD is obtained by reducing the detection threshold values of the IS-MAP procedure compared to the S-MAP procedure. It is proved analytically that the FDR of the IS-MAP procedure is still controlled under the desired level despite its lower detection threshold values. 
\item The asymptotic ADD behavior of the S-MAP and the IS-MAP procedures is established analytically for geometric prior distribution of the change points. It is shown that for any proportion value, both detection procedures are scalable in the sense that their asymptotic ADD does not increase with the number of data streams. In addition, the asymptotic ADD improvement that is obtained by using the IS-MAP procedure in comparison to the S-MAP procedure is characterized quantitatively.  
\item We conduct simulations in order to evaluate the performance and to verify the established theoretical properties of the S-MAP and the IS-MAP procedures. 
\item The S-MAP and the IS-MAP procedures are used for investigating the tradeoff between reducing the ADD and reducing the average number of observations (ANO) drawn until change points are declared. The proposed analysis can be useful for developing distributed statistical inference procedures using large-scale sensor networks in limited communication capability scenarios. 
\end{enumerate}

Preliminary results of this paper appear in our conference paper \cite{ICASSP2020} and a deterministic version of the proposed detection methods appears in \cite{CISS2020}. The remaining of this paper is organized as follows. In Section \ref{sec:Bayesian problem formulation}, we formulate the Bayesian multiple change-point detection problem. The S-MAP and the IS-MAP procedures are derived in Sections \ref{sec:S-MAP detection procedure} and \ref{sec:Improving the S-MAP procedure}, respectively, and their FDR control property is proved. Asymptotic ADD analysis of the S-MAP and the IS-MAP procedures is conducted in Section \ref{sec:ADD analysis of the S-MAP and the IS-MAP procedures}. Our simulations and conclusions appear in Sections \ref{sec:Numerical simulations} and \ref{sec:Conclusion}, respectively.

\section{Bayesian problem formulation}\label{sec:Bayesian problem formulation}
We consider $K$ statistically independent discrete time data streams denoted by $\{X_n^{(k)}\}_{n=1}^{\infty}$, $k\in[K]\define\{1,\ldots,K\}$. For the $k$th data stream there is a random change point, $t^{(k)}\geq 1,~\forall k\in[K]$, where the prior distribution of each change point is assumed to be known. Commonly, geometric distribution is assumed as a prior for discrete time change-point detection \cite{POOR_QUICKEST,MOULIN_BOOK}. The change points are assumed to be independent and identically distributed (i.i.d.) among the data streams. For the $k$th data stream, given its change point, $t^{(k)}$, we assume that $\{X_n^{(k)}\}_{n=1}^{t^{(k)}-1}$ are i.i.d. with known probability density $f_0$ and $\{X_n^{(k)}\}_{n=t^{(k)}}^{\infty}$ are i.i.d. with known probability density $f_1$. Due to communication limitations, at a given time slot we choose a subset of data streams to observe among the active data streams. Let $K_n\in\mathbb{N}$ denote the number of active sensors at time slot $n$. We set a fixed proportion value $q\in[0,1]$ and observe $\lceil qK_n \rceil\in\mathbb{N}$ of the active data streams, where $\lceil\cdot\rceil$ is the ceiling operator. The actual data vectors that are sequentially observed by the FC are denoted by $\{Y_n^{(s_n)}\}_{n=1}^{\infty}$, where $s_n\subset[K]$ is the subset of sensor indices that are monitored at time slot $n$. The filtration at time slot $n$ is the $\sigma$-algebra generated by the random vectors $Y_1^{s_1},\ldots,Y_n^{s_n}$, which is denoted by ${\mathcal{F}}_n\define\sigma(Y_1^{s_1},\ldots,Y_n^{s_n})$. In addition, we define the filtration of all the data as ${\mathcal{F}}_\infty\define\sigma(\{Y_m^{s_m}\}_{m=1}^\infty)$. For $k\in[K]$, the event $\{t^{(k)} \leq n \}$ stands for the case that change in the $k$th data stream has taken place before or at time slot $n$. We define the posterior probability of the event $\{t^{(k)} \leq n \}$ using the observations up to time slot $n$ as
\be\label{posterior_prob}
\pi_{n}^{(k)} \define P(t^{(k)}\leq n|{\mathcal{F}}_{n}),~n=1,2,\ldots, 
\ee 
and $\pi_{0}^{(k)}\define0$. We also define the likelihood ratio (LR), 
\be\label{LR_define}
L(X)\define\frac{f_1(X)}{f_0(X)}, 
\ee
and denote the Kullback-Leibler divergence of $f_1$ and $f_0$ as $D(f_1||f_0)$.\\
\indent
Under the assumption of i.i.d. change points, by using Bayes' rule we can recursively compute $\pi_{n}^{(k)}$ as follows:
\be\label{general_recursive_update}
\pi_{n}^{(k)} = \begin{cases}
\frac{L(X_n^{(k)})(\pi_{n-1}^{(k)} +\rho_n(1-\pi_{n-1}^{(k)}))}{L(X_n^{(k)})(\pi_{n-1}^{(k)} + \rho_n(1-\pi_{n-1}^{(k)})) + (1-\rho_n)(1-\pi_{n-1}^{(k)})},~ k\in s_n \\
\pi_{n-1}^{(k)} + \rho_n(1- \pi_{n-1}^{(k)}), ~~~~~~ \qquad \qquad \qquad   k \notin s_n
\end{cases}
\ee
$n\geq 1,~\forall k\in[K]$, where $\rho_n\define P(t^{(k)}=n|t^{(k)}\geq n)$ depends on the prior distribution of the change point. In \cite[Eq. (4.2)]{TARTAKOVSKY_GENERAL}, the statistic $\Lambda_n^{(k)}\define\frac{\pi_{n}^{(k)}}{1-\pi_{n}^{(k)}}$ is considered instead of $\pi_{n}^{(k)}$ and corresponding recursive update formula is presented for single change-point detection under general prior distribution. In case $k\in s_n$, then at time slot $n$ an observation is received from sensor $k$ and $\pi_{n}^{(k)}$ is computed using the observations received before time slot $n$, the prior distribution of $t^{(k)}$, and the new observation $X_n^{(k)}$. The posterior update for the case $k \notin s_n$ corresponds to the case in which at time slot $n$ we do not receive an observation from sensor $k$. In this case, $\pi_{n}^{(k)}$ is computed using only the observations received before time slot $n$ and the prior distribution of $t^{(k)}$. It is shown in \cite{GENG_BAYRAKTAR} that under mild conditions, the $k$th sensor posterior probability is a sufficient statistic for evaluating the $k$th stopping rule ADD and Type I error probability.\\
\indent
In the considered problem, we have to define multiple stopping rules $T^{(k)},~k\in[K]$, where the event $\{T^{(k)} \leq n \}$ is measurable w.r.t. $\mathcal{F}_n$. We define 
\be\label{FDR_define}
{\text{FDR}}\define {\rm{E}}\bigg[\frac{ V}{\max\{ R,1\}}\bigg], 
\ee
where ${\rm{E}}[\cdot]$ stands for the expectation. The term $V$ is the number of false discoveries, i.e. the size of the subset of $[K]$ s.t. $T^{(k)}<t^{(k)}$. The term $R$ denotes the number of change points declared, i.e. the size of the subset of $[K]$ s.t. $T^{(k)}<\infty$. We would like to control the FDR s.t. it will be no higher than a predefined tolerated level $\alpha\in(0,1)$. The ADD for the $k$th data stream is defined as 
\be\label{ADD_k_margin}
{\text{ADD}}_k\define{\rm{E}}[\max\{0,T^{(k)}-t^{(k)}\}].
\ee
Since we consider multiple statistically independent data streams, we define the overall ADD as
\be\label{ADD}
{\text{ADD}}\define \frac{1}{K}\sum_{k=1}^{K}{\text{ADD}}_k.
\ee
Assume that at time slot $n$, we have $K_n$ active data streams. Then, we observe $\lceil qK_n \rceil$ of them. We define
\be\label{ANO}
{\text{ANO}}\define {\rm{E}}\bigg[\frac{1}{K}\sum_{n=1}^{\underset{k\in[K]}{\sup}\{T^{(k)}\}}\left\lceil qK_n \right\rceil\bigg]. 
\ee
The ANO definition extends the definition from \cite{BANERJEE_VEERAVALLI}, which is defined for single change point detection, i.e. $K=1$. A difference between the definitions is that the ANO from \cite{BANERJEE_VEERAVALLI} does not consider the observations drawn after the change point occurs, while the ANO definition in \eqref{ANO} takes into account all the observations drawn until change points are declared. This is in order to properly evaluate the communication burden caused by transmissions of data streams from the sensors to the FC. In the following section, we propose the S-MAP procedure, which is a Bayesian multiple change-point detection procedure that controls the FDR under the limitation on the proportion of sensors communicating their data streams to the FC.

\section{S-MAP detection procedure}\label{sec:S-MAP detection procedure}
In this section, we propose a Bayesian detection procedure that is tasked to eventually discover all the random change points that occur in the monitored environment. At a given time slot, we consider each sensor individually and evaluate its posterior probability from \eqref{posterior_prob} using the recursive formula from \eqref{general_recursive_update}. At time slot $n$, we have $K_n$ active data streams of which we observe only a subset of size $\lceil qK_n \rceil\in\mathbb{N}$. The developed S-MAP procedure extends the method in \cite{CHEN_ZHANG_POOR_BAYESIAN} by proposing a rule for choosing the subset of $\lceil qK_n \rceil$ data streams to observe. In the S-MAP procedure, we use the posterior probability from \eqref{posterior_prob} as a test statistic, rather than the test statistic from \cite{TARTAKOVSKY_ASIMPTOTIC,CHEN_ZHANG_POOR_BAYESIAN}, which is based on a Bayesian version of the LR. The test statistic from \cite{TARTAKOVSKY_ASIMPTOTIC,CHEN_ZHANG_POOR_BAYESIAN} is used under a very strong global false alarm probability constraint \cite{TARTAKOVSKY_ASIMPTOTIC} that may be too conservative in terms of FDR control. Under the communication limitations, among the $K_n$ active data streams, we choose to observe the $\lceil qK_n \rceil$ data streams with the highest posterior probabilities of a change point having occurred. The motivation for the S-MAP approach is that we are interested in minimizing the time between the occurrence of a change point and its declaration using the sequentially updated posterior probabilities. The S-MAP procedure that monitors all of the active data streams, i.e. with $q=1$, is denoted as the parallel procedure. In the following, we describe the proposed S-MAP procedure.\\
\indent
We construct a descending set of $K$ thresholds $Q_r,~r\in[K]$, s.t. the detection on the $k$th data stream that samples until $\pi_{n}^{(k)}\geq Q_r$ has a Type I error probability that is smaller than or equal to $\frac{r\alpha}{K}$, where $\alpha\in(0,1)$ is the predefined FDR tolerance level. Formally,
\be\label{type1error}
P(\exists n < t^{(k)} \text{ s.t. } \pi^{(k)}_n \geq Q_r) \leq \frac{r}{K}\alpha.
\ee
According to \cite{TARTAKOVSKY_GENERAL} and \cite[p. 225]{MOULIN_BOOK}, the choice 
\be\label{high_thresholds}
Q_r=1-\frac{r\alpha}{K} 
\ee
ensures that \eqref{type1error} is satisfied. The proposed detection procedure is divided into sampling stages. Each sampling stage may take several time slots. In the beginning of a sampling stage, we gather all the active data streams and obtain observations from a subset of them, according to the S-MAP approach. This process is repeated at each time slot sequentially, until at least one active data stream posterior probability exceeds its corresponding threshold. Then, we declare changes for some of the active data streams, which are then eliminated from the active data streams set.\\
\indent
Let $I_j$ denote the set of indices of active data streams with cardinality $|I_j|$ at the beginning of the $j$th sampling stage and let $n_j$ denote the time slot at the end of the $j$th sampling stage. Note that $I_1=[K]$ and $n_0=0$. The $j$th stage of sampling is described as follows: 
\begin{enumerate}
\item Sample the $\lceil q|I_j| \rceil$ data streams with the currently highest posterior probabilities. 
\item Update the posterior probabilities of the sensors with active data streams using \eqref{general_recursive_update}. 
\item Sort the updated posterior probabilities in ascending order as $\pi_{n}^{(i(n,l))}$, where $i(n,l)$ denotes the index of the $l$th ordered posterior probability at time slot $n$. 
\item Repeat this process until time slot $n_j$ in which at least one of the posterior probabilities is higher than its corresponding threshold, i.e. $n_j=\min\{n>n_{j-1}:\exists l\in[|I_j|], \pi_{n}^{(i(n,l))}\geq Q_{K-l+1}\}$.
\item Declare change points for the data streams $i(n_j,l_j),i(n_j,l_j+1),\ldots,i(n_j,|I_j|)$, where $l_j=\min\{l\in[|I_j|]:\pi_{n_j}^{(i(n_j,l))}\geq Q_{K-l+1}\}$ and remove these data streams from the set of active data streams. 
\item Update $I_{j+1}$ to be the set of indices of the remaining active data streams. Stop the procedure if $|I_{j+1}|=0$.
\end{enumerate}
In the following theorem, we show that we control the FDR of the S-MAP procedure to remain under the upper bound constraint $\alpha\in(0,1)$.
\begin{Theorem}\label{T_FDR_S_MAP}
For upper bound constraint $\alpha\in(0,1)$, the S-MAP procedure satisfies
\be\label{FDR_defineS_MAP}
{\text{FDR}}\leq\alpha. 
\ee 
\end{Theorem}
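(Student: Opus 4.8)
The plan is to follow a Benjamini--Hochberg-type argument, reducing the FDR to a sum of per-stream contributions and exploiting the per-stream Type I error control \eqref{type1error} together with the statistical independence of the data streams. First I would write, using $V=\sum_{k=1}^{K}\mathbf{1}(T^{(k)}<t^{(k)})$ and noting that whenever the indicator is nonzero stream $k$ has been declared so that $R\geq 1$ and $\max\{R,1\}=R$,
\be
{\text{FDR}}=\sum_{k=1}^{K}{\rm{E}}\bigg[\frac{\mathbf{1}(T^{(k)}<t^{(k)})}{\max\{R,1\}}\bigg]=\sum_{k=1}^{K}\sum_{r=1}^{K}\frac{1}{r}P\big(T^{(k)}<t^{(k)},R=r\big),
\ee
so that it suffices to bound each inner double sum by $\alpha/K$.

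The key structural observation I would establish next concerns which threshold a declared stream actually clears. Writing $D_j=|I_j|-l_j+1$ for the number of declarations in stage $j$ and using $|I_j|=K-\sum_{i<j}D_i$ together with step 5) of the procedure, every stream declared in stage $j$ has posterior at least $\pi_{n_j}^{(i(n_j,l_j))}\geq Q_{K-l_j+1}=Q_{\sum_{i\leq j}D_i}$. Since the total number of discoveries is $R=\sum_j D_j$ and the thresholds \eqref{high_thresholds} are decreasing in their index, any declared stream clears a threshold $Q_{c_k}$ with $c_k\leq R$, and hence also clears the looser threshold $Q_R$. In particular, on the event $\{T^{(k)}<t^{(k)},R=r\}$ the stream $k$ is a false discovery, so there exists $n<t^{(k)}$ with $\pi_n^{(k)}\geq Q_{c_k}\geq Q_r$, giving the containment $\{T^{(k)}<t^{(k)},R=r\}\subseteq\{\exists\, n<t^{(k)}:\pi_n^{(k)}\geq Q_r\}$.

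The crucial and most delicate step is the decoupling. A purely union-bound use of the containment above would over-count across the values of $r$ by a factor of $K$, so I would instead use the standard freezing device: for each $k$ introduce a surrogate count $R^{(k)}$ obtained by running the procedure while forcing stream $k$ to be declared, so that $R^{(k)}$ depends only on the observations of the other $K-1$ streams and is therefore independent of $\{X_n^{(k)}\}_n$ and $t^{(k)}$ by the streams' independence. The combinatorial content to be verified is the step-up identity $\{T^{(k)}<t^{(k)},R=r\}\subseteq\{\exists\, n<t^{(k)}:\pi_n^{(k)}\geq Q_r\}\cap\{R^{(k)}=r\}$, i.e. that declaring an already-discovered stream leaves the final count unchanged. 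Granting this, independence and \eqref{type1error} give
\be
\sum_{r=1}^{K}\frac{1}{r}P\big(T^{(k)}<t^{(k)},R=r\big)\leq\sum_{r=1}^{K}\frac{1}{r}\cdot\frac{r\alpha}{K}\cdot P(R^{(k)}=r)\leq\frac{\alpha}{K},
\ee
and summing over the $K$ streams yields ${\text{FDR}}\leq\alpha$.

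I expect the main obstacle to be this freezing identity in the multi-stage sequential setting: unlike the one-shot BH procedure, here streams are removed after declaration, the comparison threshold index is data-dependent through the ascending ordering, and the S-MAP monitoring policy couples the streams through the selection of the $\lceil qK_n\rceil$ observed sensors. I would handle the monitoring coupling by relying on the fact --- already invoked in \eqref{high_thresholds} --- that \eqref{type1error} holds for the posterior process $\pi_n^{(k)}$ regardless of the monitoring schedule, and handle the multi-stage removal by arguing that forcing the discovery of $k$ can only preserve the declaration of every stream that was already declared, so that the surrogate procedure reproduces the same terminal count $R$ on the event in question.
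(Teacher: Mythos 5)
Your decomposition of the FDR and the structural observation that every declared stream clears $Q_{c_k}$ with $c_k=\sum_{i\leq j}D_i\leq R$ are both correct, but the argument has a genuine gap at exactly the step you flag as delicate: the freezing identity $\{T^{(k)}<t^{(k)},R=r\}\subseteq\{\exists\,n<t^{(k)}:\pi_n^{(k)}\geq Q_r\}\cap\{R^{(k)}=r\}$ is asserted, not proved, and under the S-MAP monitoring policy it is not available. For $R^{(k)}$ to be independent of stream $k$'s data it must be a function of the other $K-1$ streams alone; but once stream $k$ is forced out, the active-set sizes $K_n$ change, hence the budgets $\lceil qK_n\rceil$ and the identity of the top-posterior streams selected at each slot change, hence the observation times and therefore the posterior sample paths of every other stream change. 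The surrogate run thus does not reproduce the original declarations: previously declared streams need not be declared at the same stage (or at all), and previously undeclared streams may become declared because they are observed more often. Your appeal to the schedule-free validity of \eqref{type1error} controls only the marginal crossing probability and does nothing to decouple $R$ from stream $k$'s data, and your monotonicity claim, even if it held, would give at best $R^{(k)}\geq R$ whereas the telescoping to $\alpha/K$ requires exact equality on the event in question. Even for $q=1$ the step-up property would still have to be verified for this staged procedure with rank-dependent thresholds $Q_{K-l+1}$; for $q<1$ the coupling through the selection rule breaks it outright.

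The paper sidesteps all of this with a purely Bayesian argument that needs neither independence across streams nor any leave-one-out construction (its proof of Theorem \ref{T_FDR_S_MAP} defers to the references, but the mechanism is the one written out in Appendix \ref{App_T_FDR_IS_MAP} for the IS-MAP case). Conditioning on ${\mathcal{F}}_\infty$ gives ${\rm{E}}[\onevec_{\{t^{(k)}>T^{(k)}\}}\,|\,{\mathcal{F}}_{T^{(k)}}]=1-\pi^{(k)}_{T^{(k)}}$, and your own structural observation shows that every declared stream satisfies $\pi^{(k)}_{T^{(k)}}\geq Q_{\sum_{i\leq j}D_i}\geq Q_R$, i.e. $1-\pi^{(k)}_{T^{(k)}}\leq R\alpha/K$. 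Summing over the $R$ declared streams yields ${\rm{E}}[V|{\mathcal{F}}_\infty]\leq R^{2}\alpha/K$, whence ${\text{FDR}}\leq{\rm{E}}[R\alpha/K]\leq\alpha$. This route is shorter, is valid for any monitoring schedule, and is the one you should take here; the BH-type frequentist argument is the right instinct for a non-Bayesian analogue but does not transfer to this setting without first resolving the scheduling coupling.
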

\begin{proof}
Recall that we choose the thresholds $Q_r,~r\in[K]$ from \eqref{high_thresholds}, s.t. \eqref{type1error} is satisfied. Thus, by following the lines of the FDR control proofs in \cite{CHEN_ZHANG_POOR_NON_BAYESIAN,CHEN_ZHANG_POOR_BAYESIAN}, we obtain that the FDR is controlled by the proposed S-MAP procedure under the upper bound constraint $\alpha$.
\end{proof}
In the following section, we propose an alternative detection procedure that is less conservative than the S-MAP procedure in terms of FDR control. Therefore, the proposed alternative procedure has improved performance in terms of ADD and ANO compared to the S-MAP procedure.

\section{Improving the S-MAP procedure}\label{sec:Improving the S-MAP procedure}
In order to guarantee FDR control, the S-MAP procedure uses the false alarm constraints from \eqref{type1error}, which are the same false alarm constraints as in \cite{CHEN_ZHANG_POOR_BAYESIAN} to guarantee FDR control. However, we show in this section that the false alarm constraints from \cite{CHEN_ZHANG_POOR_BAYESIAN} may be too conservative and the corresponding posterior probability threshold values may be too high. We propose the IS-MAP detection procedure, which is similar to the S-MAP procedure except that its threshold values are lower than the thresholds of the S-MAP procedure. Since the IS-MAP procedure uses lower threshold values, then for a fixed proportion, $q$, the ADD and ANO will decrease compared to the S-MAP procedure, i.e. the ADD and ANO performance will improve. Moreover, using the lower thresholds, we prove that we can still control the FDR under the desired level, $\alpha$. In the IS-MAP procedure, we construct a set of $K$ thresholds $Q_k,~k\in[K]$, s.t. the detection on the $k$th data stream that samples until $\pi_{n}^{(k)}\geq Q_k$ has an individual Type I error probability that is smaller than or equal to $\alpha$, where $\alpha\in(0,1)$ is the predefined FDR tolerated level. Formally,
\be\label{type1error_alter}
P(\exists n < t^{(k)} \text{ s.t. } \pi^{(k)}_n \geq Q_k) \leq \alpha.
\ee
According to \cite{TARTAKOVSKY_GENERAL} and \cite[p. 225]{MOULIN_BOOK}, the choice 
\be\label{thresholds_eq_alpha}
Q_{k}=Q= 1-\alpha,~\forall k\in[K],
\ee
ensures that \eqref{type1error_alter} is satisfied. Since the thresholds of the IS-MAP procedure are all equal to $Q$, its $j$th sampling stage can be written in a more compact form than the corresponding sampling stage of the S-MAP procedure. Let $I_j$ denote the set of indices of active data streams with cardinality $|I_j|$ at the beginning of the $j$th sampling stage and let $n_j$ denote the time slot at the end of the $j$th sampling stage. The $j$th stage of sampling is described as follows:
\begin{enumerate}
\item Sample the $\lceil q|I_j| \rceil$ data streams with highest posterior probabilities. 
\item Update the posterior probabilities of the sensors with active data streams using \eqref{general_recursive_update}. 
\item Repeat this process until time slot $n_j$ in which at least one of the posterior probabilities is higher than the threshold $Q$, i.e. $n_j=\min\{n>n_{j-1}:\exists k\in I_j, \pi_{n}^{(k)}\geq Q\}$.
\item Declare change points for all the data streams with indices in $I_j$ whose posterior probabilities are higher than or equal to $Q$ and remove these data streams from the set of active data streams. 
\item Update $I_{j+1}$ to be the set of indices of the remaining active data streams. Stop the procedure if $|I_{j+1}|=0$.
\end{enumerate}
In the following theorem, we show that the FDR of the IS-MAP procedure satisfies the desired upper bound constraint.
\begin{Theorem}\label{T_FDR_IS_MAP}
For upper bound constraint $\alpha\in(0,1)$, the IS-MAP procedure satisfies
\be\label{FDR_defineIS_MAP}
{\text{FDR}}\leq\alpha. 
\ee 
\end{Theorem}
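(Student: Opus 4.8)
The plan is to exploit a structural feature that sets this problem apart from ordinary multiple testing: here \emph{every} data stream eventually carries a genuine change, $t^{(k)}<\infty$, and a false discovery is a \emph{timing} error (a declaration made before $t^{(k)}$) rather than a declaration of a nonexistent effect. The payoff I want from this is that the discovery count is not genuinely random --- every stream is declared in finite time, so $R=K$ almost surely and the denominator $\max\{R,1\}$ in \eqref{FDR_define} collapses to the constant $K$. Granting that, the theorem reduces to averaging the per-stream false-alarm probabilities, each of which is already pinned below $\alpha$ by the single-threshold constraint \eqref{type1error_alter}.

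First I would argue $R=K$ a.s. For a proper prior the survival function obeys $P(t^{(k)}>n)=\prod_{m=1}^{n}(1-\rho_m)$, which tends to $0$, so $\sum_m\rho_m=\infty$. Tracking the posterior odds $\pi_n^{(k)}/(1-\pi_n^{(k)})$ through \eqref{general_recursive_update}, each unsampled step ($k\notin s_n$) multiplies them by at least $(1-\rho_n)^{-1}>1$, while each sampled step after the change multiplies them by at least $L(X_n^{(k)})/(1-\rho_n)$, whose logarithm has positive mean under $f_1$ because $D(f_1||f_0)>0$; in either regime the odds grow without bound, so $\pi_n^{(k)}\to1>Q$ for every $k$, and in particular no stream can be starved indefinitely by the ``sample-the-largest'' policy. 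Together with the stage bookkeeping --- each stage ends at a finite $n_j$ (some posterior crosses $Q$) and deletes at least one stream, so there are at most $K$ stages --- this forces every $T^{(k)}$ to be finite and hence $R=K$.

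With $R=K$ the remainder is routine. Using the definition of $V$ as the number of indices with $T^{(k)}<t^{(k)}$, \eqref{FDR_define} becomes
\be
{\text{FDR}}=\frac{1}{K}{\rm{E}}[V]=\frac{1}{K}\sum_{k=1}^{K}P\big(T^{(k)}<t^{(k)}\big).
\ee
A declaration of stream $k$ happens only at a time where its posterior has reached the threshold, $\pi_{T^{(k)}}^{(k)}\geq Q$, so $\{T^{(k)}<t^{(k)}\}\subseteq\{\exists\,n<t^{(k)}\text{ s.t. }\pi_n^{(k)}\geq Q\}$; by the choice \eqref{thresholds_eq_alpha} this last event has probability at most $\alpha$ through \eqref{type1error_alter}. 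Summing the $K$ terms yields ${\text{FDR}}\leq\frac{1}{K}\cdot K\alpha=\alpha$, which is \eqref{FDR_defineIS_MAP}.

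The hard part is the first step. I would be most careful that the monitoring policy, which always samples the largest posteriors, does not prevent a post-change stream from ever reaching the threshold; the odds-growth argument above is what rules this out, since even a permanently unsampled stream has $\pi_n^{(k)}\to1$ for a proper prior. I would also stress \emph{why} this step is indispensable: for a genuinely random $R$ a fixed threshold of the form \eqref{thresholds_eq_alpha} would not control the FDR at all (the classical all-null regime of multiple testing is a counterexample), so the per-stream guarantee \eqref{type1error_alter} upgrades to \eqref{FDR_defineIS_MAP} precisely because, in the change-point setting, the number of discoveries is deterministic.
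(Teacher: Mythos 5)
Your proof is correct, but it takes a genuinely different route from the paper's. The paper's proof (Appendix~A) is a conditional-expectation argument: it writes $V=\sum_k \onevec_{\{t^{(k)}>T^{(k)}\}}$, conditions on ${\mathcal{F}}_\infty$, and uses ${\rm{E}}[\onevec_{\{t^{(k)}>T^{(k)}\}}|{\mathcal{F}}_{T^{(k)}}]=1-\pi_{T^{(k)}}^{(k)}\leq 1-Q=\alpha$ on $\{T^{(k)}<\infty\}$ (and $=0$ on $\{T^{(k)}=\infty\}$), so the conditional numerator is at most $\max\{R,1\}\alpha$ and the ratio is bounded by $\alpha$ pathwise in the conditional sense. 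That argument never needs $R$ to be deterministic and never invokes the marginal Type~I error bound \eqref{type1error_alter}. You instead first prove $R=K$ a.s., collapse the denominator to $K$, and then sum the per-stream bounds \eqref{type1error_alter}; this is valid but costs you an extra almost-sure-finiteness lemma, and your justification of that lemma is slightly loose: during \emph{sampled pre-change} slots the log-likelihood-ratio increment has negative mean $-D(f_0\|f_1)$, so the odds need not grow ``in either regime''---the clean argument is that $\pi_n^{(k)}$ is a bounded submartingale with $\liminf_n\pi_n^{(k)}\geq P(t^{(k)}\leq N|{\mathcal{F}}_\infty)$ for every $N$, hence $\pi_n^{(k)}\to 1$ a.s.\ because the prior is proper. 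One more caveat on your closing remark: the determinism of $R$ is indispensable only for \emph{your} route; the paper's conditional argument controls the FDR with the same fixed threshold \eqref{thresholds_eq_alpha} even when $R$ is random, which is why the theorem does not actually hinge on every stream being declared. What your approach buys in exchange is an explicit frequentist reading of the result---the FDR equals the average of $K$ marginal false-alarm probabilities, each individually at most $\alpha$---which makes the mechanism of the bound more transparent.
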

\begin{proof}
The proof is given in Appendix \ref{App_T_FDR_IS_MAP}.
\end{proof}
As mentioned previously, the proposed IS-MAP procedure is similar to the S-MAP procedure from Section \ref{sec:S-MAP detection procedure}, except that the procedures use different thresholds in order to guarantee the FDR control. Since the thresholds of the IS-MAP procedure in \eqref{thresholds_eq_alpha} are smaller than the thresholds of the S-MAP procedure, then for a fixed proportion, $q$, the IS-MAP procedure will have a lower ADD and ANO than the S-MAP procedure, while the FDR of the IS-MAP procedure will be higher than the S-MAP FDR. It should be noted that in case of model uncertainty, FDR control is not guaranteed for the S-MAP and the IS-MAP procedures. Then, depending on the application, if ADD and ANO are more significant than FDR, the IS-MAP procedure should be implemented rather than the S-MAP procedure, while if FDR is more significant than ADD and ANO, then the S-MAP procedure may be preferred. In the following section, we analyze the asymptotic ADD behavior of the S-MAP and the IS-MAP procedures under the assumption of geometric prior distribution for the change points.

\section{ADD analysis of the S-MAP and the IS-MAP procedures}\label{sec:ADD analysis of the S-MAP and the IS-MAP procedures}
In this section, we derive asymptotic lower and upper bounds on the ADD of the S-MAP and the IS-MAP procedures for $\alpha\to 0$ and a fixed number of data streams $K$. Then, we characterize the behavior of these bounds as $K\to\infty$. For simplicity of the analysis, we assume that the prior distribution of each change point obeys a geometric distribution with common parameter $\rho\in(0,1)$, i.e. 
\be\label{geom_prior}
P(t^{(k)}=m)=\rho(1-\rho)^{m-1},~\forall m=1,2,\ldots,\forall k\in[K]. 
\ee
The geometric prior distribution is commonly assumed in change-point detection problems. This is a memoryless distribution that is both mathematically
convenient and provides a reasonable model in practical applications \cite{POOR_QUICKEST,MOULIN_BOOK}. Under the assumption of i.i.d. change points with geometric priors, it is shown in \cite{BANERJEE_VEERAVALLI,GENG_BAYRAKTAR} that the posterior probability of the $k$th sensor evolves in a sequential manner via the recursion
\be\label{recursive_update}
\pi_{n}^{(k)} = \begin{cases}
\frac{L(X_n^{(k)})(\pi_{n-1}^{(k)} + \rho(1-\pi_{n-1}^{(k)}))}{L(X_n^{(k)})(\pi_{n-1}^{(k)} + \rho(1-\pi_{n-1}^{(k)})) + (1-\rho)(1-\pi_{n-1}^{(k)})},~ k\in s_n \\
\pi_{n-1}^{(k)} + \rho(1- \pi_{n-1}^{(k)}), ~~~~~~ \qquad \qquad \qquad   k \notin s_n
\end{cases}
\ee
$n\geq 1,~\forall k\in[K]$. It can be seen that the recursive formula in \eqref{recursive_update} is obtained by substituting $\rho_n=\rho$ in \eqref{general_recursive_update}.\\
\indent
Under communication limitations, the FC observes a subsequence of the complete observation sequence from each sensor. According to the maximum a-posteriori probability (MAP) approach, the indices of the monitored observations are random and determined online based on the proportion, $q$, and the posterior probability values of the active sensors at each time slot. Therefore, it is difficult to characterize the subsequence of observations obtained from each sensor. In order to obtain asymptotic bounds on the ADD of the S-MAP and the IS-MAP procedures, we begin by considering a single change-point detection with the posterior update from \eqref{recursive_update}. Thus, we consider the observation sequence $\{X_n\}_{n=1}^{\infty}$ with change point $t$ and stopping rule of the form 
\be\label{single_stop_rule}
T=\inf\{n\in\mathbb{N}:\pi_{n}\geq 1-\eta\}, \eta\in(0,1). 
\ee
We assume that only a {\em{subsequence}} of the complete observation sequence is obtained. It is shown in \cite{GENG_BAYRAKTAR} that for any subsequence of observations, the ADD of the stopping rule in the form of \eqref{single_stop_rule} as $\eta\to 0$ satisfies 
\be\label{single_lower_bound}
{\text{ADD}}\geq\frac{|\log\eta|}{D(f_1||f_0)+|\log(1-\rho)|}(1+o_\eta(1))
\ee
and
\be\label{single_upper_bound}
{\text{ADD}}\leq\frac{|\log\eta|}{|\log(1-\rho)|}(1+o_\eta(1)),
\ee
where $o_\eta(1)\to 0$ as $\eta\to 0$. The asymptotic ADD lower bound from \eqref{single_lower_bound} is attained when the complete observation sequence is available. The asymptotic ADD upper bound from \eqref{single_upper_bound} is attained when we do not take observations and the stopping rule is based only on the prior.\\
\indent
In the following theorem, using \eqref{single_lower_bound}-\eqref{single_upper_bound}  we derive asymptotic lower and upper bounds on the ADDs of the S-MAP and the IS-MAP procedures as $\alpha\to 0$. These ADD bounds do not require any assumptions on the subsequence of observations obtained from each sensor.
\begin{Theorem}\label{bounds_ADD}
For $\alpha\to 0$ and any proportion of observed sensors, $q$, we obtain
\be\label{asympt_lower_bound_S_MAP}
{\text{ADD}}_{\text{S-MAP}}\geq\frac{|\log\alpha|}{D(f_1||f_0)+|\log(1-\rho)|}(1+o_\alpha(1)),
\ee
\be\label{asympt_upper_bound_S_MAP}
{\text{ADD}}_{\text{S-MAP}}\leq\frac{\log K-\frac{1}{K}\log K!+|\log\alpha|}{|\log(1-\rho)|}(1+o_\alpha(1)),
\ee
\be\label{asympt_lower_bound_IS_MAP}
{\text{ADD}}_{\text{IS-MAP}}\geq\frac{|\log\alpha|}{D(f_1||f_0)+|\log(1-\rho)|}(1+o_\alpha(1)),
\ee
and
\be\label{asympt_upper_bound_IS_MAP}
{\text{ADD}}_{\text{IS-MAP}}\leq\frac{|\log\alpha|}{|\log(1-\rho)|}(1+o_\alpha(1)).
\ee
\end{Theorem}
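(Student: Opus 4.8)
The plan is to reduce the multiple change-point ADD in \eqref{ADD} to per-stream single change-point delays \eqref{ADD_k_margin} and then invoke the subsequence-free bounds \eqref{single_lower_bound}--\eqref{single_upper_bound}. For each stream $k$ let $T^{(k)}$ be its declaration time; since \eqref{ADD} merely averages $\text{ADD}_k$ over $k$, it suffices to bound each $\text{ADD}_k$ and average. The only stream-specific input is the threshold that triggers the declaration of stream $k$, and the fact that \eqref{single_lower_bound}--\eqref{single_upper_bound} may be applied along the actual, data-dependent monitoring pattern of stream $k$ — this is legitimate precisely because those bounds are stated for an \emph{arbitrary} subsequence of observations.

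The two lower bounds and the IS-MAP upper bound all follow with $\eta=\alpha$. For the lower bounds I would argue that every declared stream crosses a threshold at least $1-\alpha$: in IS-MAP this is immediate from \eqref{thresholds_eq_alpha}, while in S-MAP the stage ends at $n_j$ with $\pi_{n_j}^{(i(n_j,l_j))}\ge Q_{K-l_j+1}\ge Q_K=1-\alpha$, and since the declared streams are the higher-ranked positions $l\ge l_j$, their posteriors at $n_j$ also exceed $1-\alpha$. Hence $T^{(k)}\ge\inf\{n:\pi_n^{(k)}\ge1-\alpha\}$ pathwise, so $\text{ADD}_k$ is bounded below by the delay of the rule \eqref{single_stop_rule} with $\eta=\alpha$; applying \eqref{single_lower_bound} and averaging yields \eqref{asympt_lower_bound_S_MAP} and \eqref{asympt_lower_bound_IS_MAP}. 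For the IS-MAP upper bound, because all thresholds equal $Q=1-\alpha$ the declaration time of each stream is \emph{exactly} the first time its posterior reaches $1-\alpha$, i.e. \eqref{single_stop_rule} with $\eta=\alpha$, so \eqref{single_upper_bound} bounds each $\text{ADD}_k$ and averaging gives \eqref{asympt_upper_bound_IS_MAP}.

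The crux is the S-MAP upper bound \eqref{asympt_upper_bound_S_MAP}. First I would establish a bijection between the $K$ declared streams and the $K$ thresholds $\{Q_1,\ldots,Q_K\}$: at stage $j$ the declared positions $l_j,\ldots,|I_j|$ consume thresholds $Q_{K-|I_j|+1},\ldots,Q_{K-l_j+1}$, while the survivors $1,\ldots,l_j-1$ carry thresholds $Q_{K-l_j+2},\ldots,Q_K$ into stage $j+1$; tracking this cumulatively shows the already-used thresholds after each stage are exactly a top block $Q_1,\ldots,Q_{K-l_j+1}$, so over the whole run each $Q_r$, $r\in[K]$, is assigned to exactly one stream. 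For the stream assigned $Q_r$ (at a position $l$ with $r=K-l+1$) its declaration time is no later than the first time its own posterior reaches $Q_r$, so $\text{ADD}_k$ is bounded above by the delay of \eqref{single_stop_rule} with $\eta_r\define r\alpha/K$; by \eqref{single_upper_bound}, $\text{ADD}_k\le\frac{|\log(r\alpha/K)|}{|\log(1-\rho)|}(1+o_\alpha(1))$. Summing over the bijection, dividing by $K$, and using $\sum_{r=1}^{K}|\log(r\alpha/K)|=K\log K-\log K!+K|\log\alpha|$ produces \eqref{asympt_upper_bound_S_MAP}.

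The main obstacle will be this last step: justifying the threshold bijection and, in particular, the inequality that a block-declared stream is removed no later than it would individually clear its assigned (and possibly strictly higher) threshold $Q_r$, so that \eqref{single_upper_bound} may be applied with $\eta_r=r\alpha/K$ rather than with the smaller trigger threshold $Q_{K-l_j+1}$. A secondary technical point is uniformity of the error terms: since $\eta_r=r\alpha/K\to0$ as $\alpha\to0$ for every fixed $r$ and $K$, and there are only $K$ of them, the finitely many factors $(1+o_\alpha(1))$ collapse into a single $(1+o_\alpha(1))$, so the summation step is safe and the $\log K-\frac1K\log K!$ correction is retained exactly.
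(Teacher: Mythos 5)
Your proposal is correct and follows essentially the same route as the paper's Appendix B: reduce \eqref{ADD} to per-stream delays, apply \eqref{single_lower_bound}--\eqref{single_upper_bound} with $\eta=\alpha$ for both lower bounds and the IS-MAP upper bound, and for the S-MAP upper bound assign each declared stream a distinct threshold $Q_{r_k}$ so that $\sum_k\log r_k=\log K!$ yields the $\log K-\frac{1}{K}\log K!$ correction. The ``main obstacle'' you flag --- that a block-declared stream is removed no later than its first crossing of its assigned threshold --- is precisely the step the paper asserts in one sentence without further justification, so your explicit stage-by-stage bookkeeping of which thresholds are consumed is, if anything, more careful than the published argument.
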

\begin{proof}
The proof is given in Appendix \ref{App_bounds_ADD}.
\end{proof}
For the ADD of the stopping rule, $T$, from \eqref{single_stop_rule} we can derive a tighter upper bound than \eqref{single_upper_bound} under some assumptions on the subsequence of observations obtained for the detection. Let us denote by $\{X_{V_n}\}_{n=1}^{\infty}$ the subsequence of the complete observation sequence, where $V_0\define 0$ and $V_1,V_2,\ldots$ are the integer time slots in which observations are obtained for the detection of the single change point, $t$, using the stopping rule, $T$. Equivalently, we sample the complete observation sequence with intervals 
\be\label{interval_define}
\zeta_n\define V_n-V_{n-1}\geq 1,n\in\mathbb{N}.
\ee
In addition, we define 
\be\label{arithmetic_mean}
\zeta^{(N)}\define\frac{1}{N}\sum_{n=1}^{N}\zeta_n=\frac{V_N}{N},
\ee
which is the average length of intervals in which we sample $N$ observations from the observation sequence, the stopping rule,
\be\label{single_stop_rule_ignore}
\Gamma\define\inf\{n\in\mathbb{N}:\pi_{V_n}\geq 1-\eta\},
\ee
and the random change point,
\be\label{single_change_point_ignore}
\gamma\define\inf\{n\in\mathbb{N}:V_n\geq t\}.
\ee
The stopping rule and change point from \eqref{single_stop_rule_ignore} and \eqref{single_change_point_ignore}, respectively, represent the case in which we only count time slots where observations are obtained. The time slots, $\{V_n\}_{n=1}^{\infty}$, and intervals, $\{\zeta_n\}_{n=1}^{\infty}$, may be unknown. For the derivation of a tighter asymptotic upper bound on the ADD of the stopping rule, $T$, we only assume that the intervals are bounded, i.e. there exists $1\leq{\mathcal{B}}<\infty$ s.t.
\be\label{bounded_interval_define}
\zeta_n\leq{\mathcal{B}},~\forall n\in\mathbb{N},
\ee
there exists $\zeta\in[1,{\mathcal{B}}]$ s.t.
\be\label{convergent_interval_mean}
\underset{N\to\infty}{\lim}\zeta^{(N)}=\zeta,
\ee
and
\be\label{convergent_expectation_mean}
{\rm{E}}[\zeta^{(\Gamma)}\max\{0,\Gamma-\gamma\}]=\zeta{\rm{E}}[\max\{0,\Gamma-\gamma\}](1+o_\eta(1)).
\ee
From \eqref{interval_define}-\eqref{arithmetic_mean}, $\zeta^{(\Gamma)}=\frac{V_\Gamma}{\Gamma}$, $\zeta^{(\gamma)}=\frac{V_\gamma}{\gamma}$, $\zeta_\gamma\define V_\gamma-V_{\gamma-1}$, and the specific value of $\zeta$ may be unknown. The assumption in \eqref{convergent_expectation_mean} essentially requires that $\Gamma\to\infty$ as $\eta\to0$. In the following proposition, we derive an asymptotic ADD upper bound for the stopping rule, $T$, which is tighter than \eqref{single_upper_bound}.
\begin{proposition}\label{single_ADD}
Assume that \eqref{bounded_interval_define}-\eqref{convergent_expectation_mean} are satisfied. Then, as $\eta\to 0$ the ADD of the stopping rule $T$ from \eqref{single_stop_rule} satisfies 
\be\label{single_upper_bound_assumption}
{\text{ADD}}\leq\frac{|\log\eta|}{\frac{1}{\zeta}D(f_1||f_0)+|\log(1-\rho)|}(1+o_\eta(1)).
\ee
\end{proposition}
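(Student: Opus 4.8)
The plan is to move from the real-time delay $T-t$ to the delay $\Gamma-\gamma$ measured on the observation clock defined in \eqref{single_stop_rule_ignore}--\eqref{single_change_point_ignore}, bound the latter by a drift argument, and translate back using the mean interval $\zeta$. The link between the two clocks is the identity $V_N=N\zeta^{(N)}$ from \eqref{arithmetic_mean} together with the standing assumptions \eqref{bounded_interval_define}--\eqref{convergent_expectation_mean}.

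First I would reduce the real-time ADD to an observation-clock quantity. Since $\pi_{V_\Gamma}\ge 1-\eta$ by the definition of $\Gamma$, the first time $T$ at which $\pi_n$ reaches $1-\eta$ satisfies $T\le V_\Gamma$, hence $\max\{0,T-t\}\le\max\{0,V_\Gamma-t\}$. Writing $V_\Gamma-t=(V_\Gamma-V_\gamma)+(V_\gamma-t)$ and using $0\le V_\gamma-t<\zeta_\gamma\le\mathcal B$ from \eqref{bounded_interval_define} gives $\max\{0,T-t\}\le\max\{0,V_\Gamma-V_\gamma\}+\mathcal B$. On $\{\Gamma>\gamma\}$ I would expand, via \eqref{arithmetic_mean}, $V_\Gamma-V_\gamma=\zeta^{(\Gamma)}(\Gamma-\gamma)+\gamma(\zeta^{(\Gamma)}-\zeta^{(\gamma)})$; since $\gamma\le t$ does not depend on $\eta$ and the intervals are bounded, the correction $\gamma(\zeta^{(\Gamma)}-\zeta^{(\gamma)})$ has expectation $O(1)$. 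Taking expectations and invoking \eqref{convergent_expectation_mean} then yields ${\text{ADD}}={\rm E}[\max\{0,T-t\}]\le\zeta\,{\rm E}[\max\{0,\Gamma-\gamma\}](1+o_\eta(1))+O(1)$.

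The core step is an upper bound on ${\rm E}[\max\{0,\Gamma-\gamma\}]$, for which I would track the log-odds $\log\Lambda_n$, $\Lambda_n\define\pi_n/(1-\pi_n)$. A prior-only update sends $\Lambda_{n-1}$ to $(\Lambda_{n-1}+\rho)/(1-\rho)\ge\Lambda_{n-1}/(1-\rho)$, so every one of the $V_n-V_\gamma$ slots after $V_\gamma$ raises $\log\Lambda$ by at least $|\log(1-\rho)|$, while an observation additionally multiplies the odds by $L(X_n)$. Post-change the samples at reduced indices $m>\gamma$ are i.i.d. $\sim f_1$, giving the pathwise bound $\log\Lambda_{V_n}-\log\Lambda_{V_\gamma}\ge(V_n-V_\gamma)|\log(1-\rho)|+\sum_{m=\gamma+1}^{n}\log L(X_{V_m})$. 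By the strong law, $\frac1N\sum_{m=\gamma+1}^{\gamma+N}\log L(X_{V_m})\to D(f_1||f_0)$, and by \eqref{convergent_interval_mean}, $(V_{\gamma+N}-V_\gamma)/N\to\zeta$, so the accumulated drift after $N$ observations is asymptotically $N(\zeta|\log(1-\rho)|+D(f_1||f_0))$. A renewal/overshoot argument then shows the level $\log\frac{1-\eta}{\eta}\sim|\log\eta|$ is crossed after at most $\frac{|\log\eta|}{D(f_1||f_0)+\zeta|\log(1-\rho)|}(1+o_\eta(1))$ observations. Substituting into the bound of the previous paragraph and absorbing the $O(1)$ term, which is negligible since the estimate diverges as $\eta\to0$, produces $\frac{\zeta|\log\eta|}{D(f_1||f_0)+\zeta|\log(1-\rho)|}=\frac{|\log\eta|}{\frac1\zeta D(f_1||f_0)+|\log(1-\rho)|}$, which is \eqref{single_upper_bound_assumption}.

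The main obstacle, I expect, is making the observation-clock drift argument rigorous with non-homogeneous intervals: the per-observation prior gain is $\zeta_n|\log(1-\rho)|$ with $\zeta_n$ varying, so the log-odds increments are not i.i.d. and the complete-observation result behind \eqref{single_lower_bound} cannot be cited verbatim. I would use \eqref{bounded_interval_define} to secure uniform integrability of the overshoot and \eqref{convergent_interval_mean} to replace the empirical interval mean by $\zeta$, and handle the random, $\eta$-independent starting index $\gamma$ through a Wald-type identity for the stopped log-likelihood-ratio sum. Assumption \eqref{convergent_expectation_mean} is precisely what decouples $\zeta^{(\Gamma)}$ from $\Gamma-\gamma$ in expectation, removing the correlation between the stopping index and the realized sampling pattern that would otherwise obstruct the passage from the observation clock back to real time.
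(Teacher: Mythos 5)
Your proposal is correct and follows the same overall architecture as the paper's proof: the reduction $T\le V_\Gamma$, the decomposition $V_\Gamma-t=(V_\Gamma-V_\gamma)+(V_\gamma-t)$ with $V_\gamma-t\le\zeta_\gamma-1\le{\mathcal{B}}-1$, the identity $V_N=N\zeta^{(N)}$ used to isolate $\zeta^{(\Gamma)}(\Gamma-\gamma)$, the bounds ${\rm E}[\gamma]\le 1/\rho$ and \eqref{bounded_interval_define} to make the correction terms $O(1)$, and assumption \eqref{convergent_expectation_mean} to decouple $\zeta^{(\Gamma)}$ from $\Gamma-\gamma$ in expectation are all exactly the paper's steps. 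Where you genuinely diverge is in establishing the key intermediate bound ${\rm E}[\max\{0,\Gamma-\gamma\}]\le \frac{|\log\eta|}{D(f_1||f_0)+\zeta|\log(1-\rho)|}(1+o_\eta(1))$: you propose to prove it from scratch by a drift/renewal argument on the log-odds, whereas the paper observes that the observation-clock problem is itself a standard Bayesian change-point problem whose transformed prior $P(\gamma=m)=(1-\rho)^{V_{m-1}}-(1-\rho)^{V_m}$ has exponential tail rate $\lim_{m\to\infty}-\frac{1}{m}\log P(\gamma\ge m+1)=\zeta|\log(1-\rho)|$, and then simply invokes Theorem 3 of \cite{TARTAKOVSKY_GENERAL} with that rate playing the role that $|\log(1-\rho)|$ plays in the fully observed case. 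Your pathwise inequality $\log\Lambda_{V_n}-\log\Lambda_{V_\gamma}\ge (V_n-V_\gamma)|\log(1-\rho)|+\sum_{m=\gamma+1}^{n}\log L(X_{V_m})$ is valid, but the passage from the almost-sure growth rate to the expectation bound (uniform integrability of the crossing time, control of the random starting value $\log\Lambda_{V_\gamma}$) is precisely the content of the cited theorem; recognizing that the tail-rate condition of that theorem is verified by \eqref{arithmetic_mean} and \eqref{convergent_interval_mean} is what lets the paper avoid redoing that work. Both routes yield the same constant; the paper's is shorter and rigorous by citation, while yours is more self-contained but would require the renewal and uniform-integrability steps to be fully fleshed out to count as a complete proof.
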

\begin{proof}
The proof is given in Appendix \ref{App_single_ADD}.	
\end{proof}
\noindent
It should be noted that a special case of \eqref{single_upper_bound_assumption} with $\zeta_n=\zeta<\infty,n\in\mathbb{N}$, was proved in \cite{GENG_BAYRAKTAR}.\\
\indent 
Assume that each stopping rule in the S-MAP procedure satisfies the ADD upper bound in \eqref{single_upper_bound_assumption} with $\zeta=g_k^{\text{S-MAP}}$ and that each stopping rule in the IS-MAP procedure satisfies the ADD upper bound in \eqref{single_upper_bound_assumption} with $\zeta=g_k^{\text{IS-MAP}}$, $\forall k\in[K]$. In addition, assume that $g^*\define{\sup}\{\{g_k^{\text{S-MAP}}\}_{k=1}^K,\{g_k^{\text{IS-MAP}}\}_{k=1}^K\}<\infty$. Then, in a similar manner to the derivation of the upper bounds in \eqref{asympt_upper_bound_S_MAP} and \eqref{asympt_upper_bound_IS_MAP}, we obtain tighter asymptotic ADD upper bounds for the S-MAP and the IS-MAP procedures, given by
\be\label{asympt_upper_bound_S_MAP_assumption}
{\text{ADD}}_{\text{S-MAP}}\leq\frac{\log K-\frac{1}{K}\log K!+|\log\alpha|}{\frac{1}{g^*}D(f_1||f_0)+|\log(1-\rho)|}(1+o_\alpha(1))
\ee
and
\be\label{asympt_upper_bound_IS_MAP_assumption}
{\text{ADD}}_{\text{IS-MAP}}\leq\frac{|\log\alpha|}{\frac{1}{g^*}D(f_1||f_0)+|\log(1-\rho)|}(1+o_\alpha(1)),
\ee
respectively.\\
\indent
In \eqref{asympt_lower_bound_S_MAP}, \eqref{asympt_upper_bound_S_MAP}, and \eqref{asympt_upper_bound_S_MAP_assumption} and in \eqref{asympt_lower_bound_IS_MAP}, \eqref{asympt_upper_bound_IS_MAP}, and \eqref{asympt_upper_bound_IS_MAP_assumption}, we obtained asymptotic ADD bounds for the S-MAP and the IS-MAP procedures, respectively. For any fixed proportion, $q$, of observed data streams and for sufficiently small $\alpha\neq 0$ these bounds hold. We characterize the behavior of these bounds as $K$ increases towards $\infty$ in order to investigate the scalability of the S-MAP and the IS-MAP procedures, as the number of data streams increases. Let
\be\label{ADD_LB_joint}
{\text{ADD}}^{(\alpha)}_{\text{LB}}\define\frac{|\log\alpha|}{D(f_1||f_0)+|\log(1-\rho)|}
\ee
denote the asymptotic ADD lower bound for both the S-MAP and the IS-MAP procedures. It can be seen that this lower bound is a finite constant w.r.t. $K$.\\
\indent 
We denote the asymptotic ADD upper bounds for the S-MAP procedure as
\be\label{ADD_UB_S_MAP}
{\text{ADD}}^{(\alpha,K)}_{\text{S-MAP,UB}}\define\frac{\log K-\frac{1}{K}\log K!+|\log\alpha|}{|\log(1-\rho)|}.
\ee
and
\be\label{ADD_UB_S_MAP_assume}
{\text{ADD}}^{(\alpha,g^*,K)}_{\text{S-MAP,UB}}\define\frac{\log K-\frac{1}{K}\log K!+|\log\alpha|}{\frac{1}{g^*}D(f_1||f_0)+|\log(1-\rho)|}.
\ee
Consider the sequence $\{\log K-\frac{1}{K}\log K!\}_{K=1}^\infty$. Using \cite[Eq. (5)]{SANDOR} and Stirling's approximation (see e.g. \cite{SANDOR,ROBINS}) and applying some algebraic manipulations, it can be verified that this sequence is monotonically increasing and converges to $1$. Thus, we obtain that ${\text{ADD}}^{(\alpha,K)}_{\text{S-MAP,UB}}$ and ${\text{ADD}}^{(\alpha,g^*,K)}_{\text{S-MAP,UB}}$ are monotonically increasing with $K$ and converge to a finite constant, i.e.
\be\label{ADD_UB_S_MAP_lim}
\underset{K\to\infty}{\lim}{\text{ADD}}^{(\alpha,K)}_{\text{S-MAP,UB}}=\frac{1+|\log\alpha|}{|\log(1-\rho)|}
\ee
and
\be\label{ADD_UB_S_MAP_assume_lim}
\underset{K\to\infty}{\lim}{\text{ADD}}^{(\alpha,g^*,K)}_{\text{S-MAP,UB}}=\frac{1+|\log\alpha|}{\frac{1}{g^*}D(f_1||f_0)+|\log(1-\rho)|}.
\ee
In a similar manner to \eqref{ADD_UB_S_MAP}-\eqref{ADD_UB_S_MAP_assume}, we denote
\be\label{ADD_UB_IS_MAP}
{\text{ADD}}^{(\alpha)}_{\text{IS-MAP,UB}}\define\frac{|\log\alpha|}{|\log(1-\rho)|}
\ee
and
\be\label{ADD_UB_IS_MAP_assume}
{\text{ADD}}^{(\alpha,g^*)}_{\text{IS-MAP,UB}}\define\frac{|\log\alpha|}{\frac{1}{g^*}D(f_1||f_0)+|\log(1-\rho)|}.
\ee
The upper bounds in \eqref{ADD_UB_IS_MAP}-\eqref{ADD_UB_IS_MAP_assume} are finite constants w.r.t. $K$.\\
\indent
The sequence $\{\log K-\frac{1}{K}\log K!\}_{K=1}^\infty$ is nonnegative and thus,
\be\label{ADD_finite_K_comparison}
{\text{ADD}}^{(\alpha)}_{\text{IS-MAP,UB}}\leq{\text{ADD}}^{(\alpha,K)}_{\text{S-MAP,UB}}
\ee
and
\be\label{ADD_finite_K_comparison_assume}
{\text{ADD}}^{(\alpha,g^*)}_{\text{IS-MAP,UB}}\leq{\text{ADD}}^{(\alpha,g^*,K)}_{\text{S-MAP,UB}}.
\ee
In addition, by comparing \eqref{ADD_UB_IS_MAP}-\eqref{ADD_UB_IS_MAP_assume} to \eqref{ADD_UB_S_MAP}-\eqref{ADD_UB_S_MAP_assume} as $K\to\infty$, we obtain
\be\label{ADD_K_comparison}
\begin{split}
\underset{K\to\infty}{\lim}\frac{{\text{ADD}}^{(\alpha)}_{\text{IS-MAP,UB}}}{{\text{ADD}}^{(\alpha,K)}_{\text{S-MAP,UB}}}&=\underset{K\to\infty}{\lim}\frac{{\text{ADD}}^{(\alpha,g^*)}_{\text{IS-MAP,UB}}}{{\text{ADD}}^{(\alpha,g^*,K)}_{\text{S-MAP,UB}}}\\
&=\frac{|\log\alpha|}{1+|\log\alpha|}<1,
\end{split}
\ee
where the second equality is obtained by substituting \eqref{ADD_UB_S_MAP_lim}-\eqref{ADD_UB_IS_MAP_assume}. The results in \eqref{ADD_finite_K_comparison}-\eqref{ADD_K_comparison} demonstrate the ADD improvement obtained by using the IS-MAP procedure instead of the S-MAP procedure.\\
\indent 
The presented asymptotic ADD results hold for any proportion value, $q$. However, it is expected that the S-MAP ADD and the IS-MAP ADD will increase as the proportion of monitored sensors decreases. An intuitive explanation for this phenomenon is as follows: For fixed $\pi_{n-1}^{(k)}$, the posterior probability in \eqref{recursive_update} is monotonically nondecreasing with the LR, $L(\cdot)$. After a change occurs, we receive samples from $f_1$. By taking the expectation of the difference $L(X)-1$ w.r.t. $f_1$ and using ${\rm{E}}_{f_1}[L(X)]={\rm{E}}_{f_0}[L^2(X)]$ and ${\rm{E}}_{f_0}[L(X)]=1$, we obtain 
\be\label{LR_monotone_increase}
{\rm{E}}_{f_1}[L(X)-1]={\rm{E}}_{f_0}[(L(X)-1)^2]\geq 0. 
\ee
The case $L(X)=1$ corresponds to the case in which we choose not to monitor the corresponding sensor. Thus, as the number of observations increases, the threshold will usually be exceeded in an earlier time slot and consequently, the ADD will usually be lower. An advantage of observing only a small subset of sensors is that the ANO for the detection task may decrease, which reduces the communication burden. Consequently, we identify a tradeoff between the ADD and the ANO. We will investigate this tradeoff in Section \ref{sec:Numerical simulations}.

\section{Numerical simulations}\label{sec:Numerical simulations}
In this section, we evaluate the performance of the proposed S-MAP and IS-MAP procedures in terms of FDR, ADD, and ANO. In addition, the analytical results from Sections \ref{sec:S-MAP detection procedure}-\ref{sec:ADD analysis of the S-MAP and the IS-MAP procedures} are verified in the simulations. The simulation results are based on $1000$ Monte Carlo runs. We generate the true change points independently for each sensor from a geometric distribution with parameter $\rho=0.01$ and assume that we know this parameter when applying the procedure. It should be noted that in case $\rho$ is unknown then by assuming a sufficiently low value for $\rho$, the FDR of the S-MAP and the IS-MAP procedures may still be controlled under the desired upper bound. The reason is that the posterior probabilities from \eqref{recursive_update} decrease as $\rho$ decreases. If the assumed value of $\rho$ is lower than the true value of $\rho$, the change-points will usually be declared in later time slots than in the case in which the true value of $\rho$ is used. Thus, the FDR will not increase. In all cases, we set the FDR upper bound as $\alpha = 0.1$.\\
\indent
For comparison purposes, we implement and evaluate the performance of two additional procedures. The first procedure is a simplified version of the S-MAP procedure, which is referred to as the simple procedure. This procedure simplifies S-MAP from Section \ref{sec:S-MAP detection procedure} by replacing the method of choosing the subset of sensors to monitor. In the simple procedure, at each time slot we randomly choose a subset of active sensors with consecutive indices to monitor within the allowed proportion. Following the FDR control proofs in \cite{CHEN_ZHANG_POOR_NON_BAYESIAN,CHEN_ZHANG_POOR_BAYESIAN}, it can be shown that the simple procedure controls the FDR under the predefined upper bound. This procedure is implemented in order to verify that the MAP approach for choosing the subset of sensors to monitor, as used in the S-MAP procedure, improves the ADD performance compared to randomly choosing this subset, as used in the simple procedure. The second method implemented for comparisons is the fully parallel procedure of \cite{CHEN_ZHANG_POOR_BAYESIAN}, named D-FDR, that observes all the data streams. The FDR control of the D-FDR procedure is shown in \cite{CHEN_ZHANG_POOR_BAYESIAN}. In this procedure, the following test statistic is used
\be\label{ALR_define}
G_{n}^{(k)} = \sum_{m=1}^{\infty}P(t^{(k)}=m)\prod_{i=m}^{n}L(X_i^{(k)}),~n=1,2,\ldots.
\ee
This test statistic is the average LR (ALR) between the hypotheses that the change occurs at $t^{(k)}=m<\infty$ and that the change never occurs, $t^{(k)}=\infty$. This ALR test statistic is recursively updated according to the following formula:
\be\label{ALR_update}
G_{n}^{(k)} = G_{n-1}^{(k)}L(X_n^{(k)})+P(t^{(k)}\geq n+1)(1-L(X_n^{(k)})),
\ee
where $G_{0}^{(k)}\define 1$. For $q=1$, the D-FDR procedure is similar to the S-MAP procedure except that it uses the ALR test statistic, rather than the posterior probability test statistic, with the thresholds
\be\label{high_ALR_thresholds}
Q_r=\frac{K}{r\alpha},~r\in[K],
\ee
in order to guarantee the same false positive constraints as in \eqref{type1error}. Assume that for the $k$th data stream, the corresponding threshold is $Q_{r_k}=\frac{K}{r_k\alpha}$, $r_k\in[K]$. It is shown in \cite{TARTAKOVSKY_ASIMPTOTIC} that in this case, using the ALR test statistic with the threshold $Q_{r_k}$ is equivalent to using the posterior probability test statistic with the threshold
\be\label{high_posterior_thresholds}
Q_{r_k}^*=1-p(t^{(k)}\geq n+1)\frac{r_k\alpha}{K}.
\ee
Thus, from \eqref{high_thresholds}, \eqref{thresholds_eq_alpha}, and \eqref{high_posterior_thresholds}, the posterior probability thresholds of the D-FDR procedure are higher than the posterior probability thresholds of the S-MAP and the IS-MAP procedures. Consequently for $q=1$, the ADD and ANO of the S-MAP and the IS-MAP procedures will be lower than the ADD and ANO of the D-FDR procedure.\\
\indent
In Subsection \ref{subsec:Gaussian distribution scenario}, we consider multiple change-point detection with known Gaussian distributions and in Subsection \ref{subsec:General model with uncertainty and known p-values}, we consider a general model under some uncertainty and use $p$-values \cite{BENJAMINI_HOCHBERG,EFRON_BOOK,BAILEY,EFRON_LARGE,POUNDS,HALME} from each sensor as observations for the multiple change-points detection. It should be noted that in the simulations, we assume that we have a sufficient number of observations for declaring the changes so there are no Type II errors corresponding to infinite ADD.

\subsection{Gaussian distribution scenario}\label{subsec:Gaussian distribution scenario}
We consider Gaussian distributions with a change in the mean and set $f_0 = \mathcal{N}(0,1)$ and $f_1 = \mathcal{N}(1,1)$ as depicted in Fig. \ref{pdf_Gaussian}. First, for $K=10,100,200,500,1000$, we examine the FDR control of the proposed S-MAP and IS-MAP procedures with $\{q=0.05m\}_{m=1}^{20}$, where $q\in[0,1]$ is the proportion of monitored sensors. The proportion $q=1$ corresponds to the parallel versions of the S-MAP and the IS-MAP procedures that observe all the active data streams at each time slot. Due to space limitations, we do not present tables of all the estimated FDR results. The resulting minimum and maximum estimated FDR values of the S-MAP procedure are $0.028$ and $0.037$, respectively, while the resulting minimum and maximum estimated FDR values of the IS-MAP procedure are $0.058$ and $0.068$, respectively. Consequently, both procedures control the FDR under the upper bound $\alpha=0.1$. These results confirm the analytical results in Theorems \ref{T_FDR_S_MAP} and \ref{T_FDR_IS_MAP}. The S-MAP FDR values are lower than the IS-MAP FDR values, since the S-MAP procedure is more conservative and uses higher thresholds than the IS-MAP procedure. For both the S-MAP and the IS-MAP procedures there is still a gap between the FDR values and the upper bound $\alpha$. This result follows from the choices of thresholds in \eqref{high_thresholds} and \eqref{thresholds_eq_alpha} for the S-MAP and the IS-MAP procedures, respectively, that neglect the overshoot in the stopping rule \cite{TARTAKOVSKY_GENERAL}.\\
\indent
In Fig. \ref{ADD_Gaussian}, we evaluate the ADD of the procedures: D-FDR, S-MAP with $q=0.5,1$, simple procedure with $q=0.5$, and IS-MAP with $q=0.5,1$ versus $K=10,100,200,500,1000$. It can be seen that all the considered procedures have an approximately constant ADD as $K$ increases, which verifies the analytical results in Section \ref{sec:ADD analysis of the S-MAP and the IS-MAP procedures}. The parallel version of the IS-MAP procedure, i.e. for $q=1$, has the lowest ADD. Moreover, it can be seen that the IS-MAP procedure with $q=0.5$ outperforms the parallel version of the S-MAP procedure and the D-FDR procedure. These results demonstrate the advantage of using the IS-MAP procedure instead of the S-MAP or the D-FDR procedures in terms of ADD. The simple procedure with $q=0.5$ has the highest ADD among the considered procedures implying that the proposed MAP approach is desirable for choosing the sensors to monitor at each time slot within the allowed proportion. In Fig. \ref{ANO_Gaussian}, we evaluate the ANO versus $K$ of the procedures: D-FDR, S-MAP with $q=0.5,1$, and IS-MAP with $q=0.5,1$. It can be seen that IS-MAP with $q=0.5$ has the lowest and the D-FDR has the highest ANO. In addition, it can be seen that for all the procedures, the ANO is approximately a constant w.r.t. $K$.\\
\indent
In the upper and middle plots of Fig. \ref{ADD_ANO_prop_tradeoff_Gaussian}, we plot the ADDs and ANOs, respectively, of the S-MAP and the IS-MAP procedures for $K=1000$ versus the proportion values $\{q=0.05m\}_{m=1}^{20}$. It can be seen that for any of the considered proportions, the IS-MAP procedure has lower ADD and ANO than the S-MAP procedure. In addition, for both procedures the ADD decreases as the proportion increases, while the ANO increases approximately linearly as the proportion increases. Thus, we notice a tradeoff between ADD and ANO as we change the proportion value, $q$. It can be seen that for both procedures there is no significant increase in ADD when the proportion decreases from $q=1$ to $q=0.3$, whereas the ANO increases significantly as we increase $q$ towards $1$. This result implies that in this example it may be a waist of resources to monitor all the active data streams in parallel. In the lower plot of Fig. \ref{ADD_ANO_prop_tradeoff_Gaussian}, we plot a curve connecting the ADD-ANO points of the S-MAP and the IS-MAP procedures from the upper and middle plots of Fig. \ref{ADD_ANO_prop_tradeoff_Gaussian}. It can be seen that in this example there is a clear tradeoff between the ADD and ANO, i.e. as the proportion, $q$, increases the ADD becomes lower, while the ANO becomes higher.\\
\indent
In order to evaluate the performance of the procedures using both the ADD and the ANO as criteria, we define a weighted risk, 
\be\label{weighted_risk}
(1-c){\text{ADD}}+c{\text{ANO}}, 
\ee
where $c\in[0,1]$ sets the weighting between the ADD and the ANO. For $c=0$ we are only interested in the ADD, while for $c=1$ we are only interested in the ANO. In the upper plot of Fig. \ref{weighted_optimal_risk_Gaussian}, we compare the weighted risks of the S-MAP and the IS-MAP procedures with different proportions $\{q=0.05m\}_{m=1}^{20}$ versus the proportion size for $c=0.2$. It can be seen that the weighted risk of the IS-MAP procedure is lower than the weighted risk of the S-MAP procedure. For both the S-MAP and the IS-MAP procedures, the best tradeoff among the considered proportions is achieved with the proportion $q=0.3$. Thus, when both the ADD and the ANO are taken into account it may not be necessary to monitor all the active data streams in parallel, i.e. to choose $q=1$.\\
\indent
In the lower plot of Fig. \ref{weighted_optimal_risk_Gaussian}, for both the S-MAP and the IS-MAP procedures, we present the best proportion among the proportions $\{q=0.05m\}_{m=1}^{20}$ in terms of the weighted risk in \eqref{weighted_risk}, i.e. the proportion with lowest risk, versus the weighting coefficient $c$. It can be seen that for both procedures, as $c$ increases the best proportion does not increase. Moreover, in most of the considered cases the best proportion decreases as $c$ increases. Thus, as we put a higher weight on the ANO compared to the ADD we should usually choose a lower proportion of data streams to observe. In addition, as we change $c$ from $0$ to $0.1$ there is a rapid decrease in the optimal proportion from $q=1$ to $q=0.45$ and $q=0.4$ in the S-MAP and the IS-MAP procedures, respectively. This result implies that even a small positive weight on the ANO leads to a much smaller proportion value than $q=1$ for which the lowest weighted risk is obtained among the considered proportions. 

\begin{figure}[ht!]
\centering
\includegraphics[scale = 0.42]{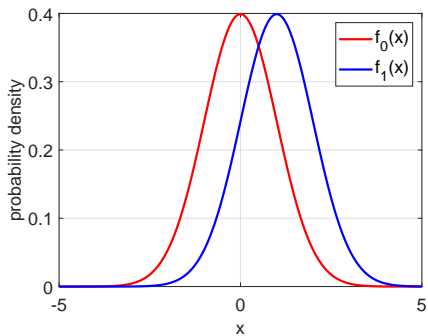}
\caption{Gaussian distributions: Probability densities $f_0 = \mathcal{N}(0,1)$ and $f_1 = \mathcal{N}(1,1)$.}
\label{pdf_Gaussian}
\end{figure}

\begin{figure}[ht!]
\centering
\includegraphics[scale = 0.42]{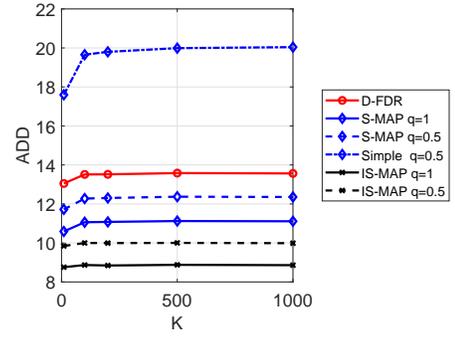}
\caption{Gaussian distributions: The ADD of the procedures D-FDR, S-MAP with $q=0.5,1$, simple, and IS-MAP with $q=0.5,1$, versus $K=10,100,200,500,1000$. It can be seen that all the considered procedures have an approximately constant ADD as $K$ increases. The parallel version of the IS-MAP procedure has the lowest ADD. Moreover, it can be seen that the IS-MAP procedure with $q=0.5$ outperforms the parallel version of the S-MAP procedure and the D-FDR procedure. These results demonstrate the improved ADD performance of the IS-MAP procedure compared to the S-MAP and the D-FDR procedures. The simple procedure with $q=0.5$ has the highest ADD implying that the MAP approach is more useful than random choice when choosing the subset of sensors to monitor.}
\label{ADD_Gaussian}
\end{figure}

\begin{figure}[ht!]
\centering
\includegraphics[scale = 0.42]{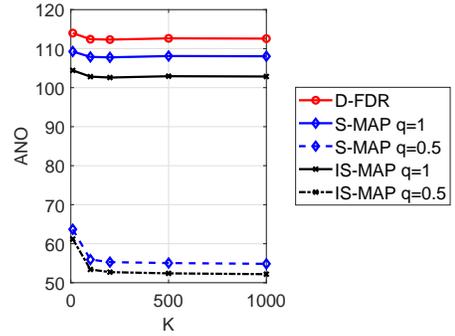}
\caption{Gaussian distributions: The ANO of the procedures D-FDR, S-MAP with $q=0.5,1$, and IS-MAP with $q=0.5,1$, versus $K$. It can be seen that IS-MAP with $q=0.5$ has the lowest ANO, while the D-FDR has the highest ANO. For all the considered procedures, the ANO is approximately a constant w.r.t. $K$.}
\label{ANO_Gaussian}
\end{figure}

\begin{figure}[ht!]
\centering
\includegraphics[scale = 0.42]{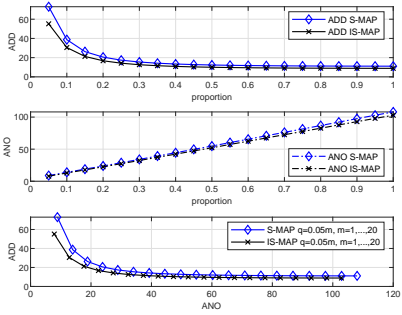}
\caption{Gaussian distributions: (Top and middle) The ADDs and ANOs of the S-MAP and the IS-MAP procedures for $K=1000$ versus the proportion values $\{q=0.05m\}_{m=1}^{20}$. For any of the considered proportions, the IS-MAP procedure has lower ADD and ANO compared to the S-MAP procedure. For both procedures the ADD decreases as the proportion increases, while the ANO increases approximately linearly as the proportion increases. It can be seen that there is no significant increase in ADD when the proportion decreases from $q=1$ to $q=0.3$, whereas the decrease in ANO is more substantial. Thus, with a small proportion we may attain a sufficiently small ADD and significantly decrease the communication burden. (Bottom) A curve connecting the ADD-ANO points of the S-MAP and the IS-MAP procedures from the upper and middle plots of this figure. In this example, there is a clear tradeoff between the ADD and the ANO, i.e. as the the ADD is lower the ANO is higher.}
\label{ADD_ANO_prop_tradeoff_Gaussian}
\end{figure}

\begin{figure}[ht!]
\centering
\includegraphics[scale = 0.42]{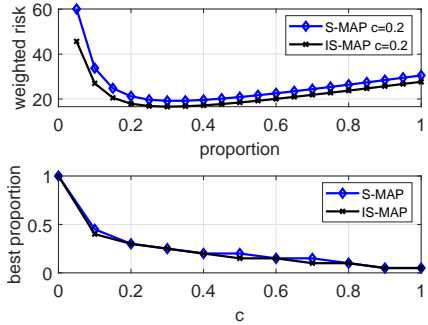}
\caption{Gaussian distributions:  (Top) The weighted risks of the S-MAP and the IS-MAP procedures with proportions $\{q=0.05m\}_{m=1}^{20}$ versus the proportion values for $c=0.2$. The weighted risk of the IS-MAP procedure is lower than the weighted risk of the S-MAP procedure. For both the S-MAP and the IS-MAP procedures, the best tradeoff among the considered proportions is achieved with the proportion $q=0.3$. Thus, in this example, when both the ADD and the ANO are taken into account it is not necessary to monitor all the active data streams in parallel and a lower weighted risk can be attained with a much lower proportion. (Bottom) The best proportion among the proportions $\{q=0.05m\}_{m=1}^{20}$ of the S-MAP and the IS-MAP procedures in terms of the weighted risk in \eqref{weighted_risk} is presented versus the weighting coefficient $c$. For both procedures, as $c$ increases the best proportion value decreases, or stays the same in a few cases. Thus, as we put a higher weight on the ANO compared to the ADD we should usually choose a lower proportion of data streams to observe for achieving a lower weighted risk. In addition, there is a rapid decrease in the optimal proportion from $q=1$ to $q=0.45$ (S-MAP) and $q=0.4$ (IS-MAP), as we change $c$ from $0$ to $0.1$ and thus, it can be seen that even a small positive weight on the ANO leads to a small proportion value, $q<0.5$, for which the lowest weighted risk is obtained among the considered proportions.}
\label{weighted_optimal_risk_Gaussian}
\end{figure}

\subsection{General model with uncertainty and known $p$-values}\label{subsec:General model with uncertainty and known p-values}
Due to bandwidth limitations, in many distributed detection applications the sensors communicate to the FC condensed information about their observations in the form of a local decision and/or sufficient statistic. In this case, significantly less data needs to be communicated. Moreover, the local distributions at each sensor may be different and local decision statistics from each sensor may be easier to fuse than fusing the raw data from all the sensors. A common local decision statistic is the $p$-value \cite{EFRON_BOOK,EFRON_LARGE,HALME}, which is the probability of obtaining test results at least as extreme as the results observed during the test assuming that the null hypothesis is correct. The $p$-value is general and is not necessarily obtained from the Gaussian distribution. It is a tool for deciding whether to reject the null hypothesis. When the $p$-value approaches zero, it is more likely that the alternative hypothesis is true \cite[p. 63]{MOULIN_BOOK}, \cite{BAILEY}.\\
\indent
In this example, we assume that the $p$-values are accurately calculated by each sensor based on its local observations. The $p$-values from each sensor are communicated to the FC for the multiple change-points detection. Under the null hypothesis the $p$-value is uniformly distributed on $[0,1]$ and thus, we set $f_0 = U(0,1)$. Usually, under the alternative hypothesis the $p$-value follows a distribution that has high density for small $p$-values and the density decreases as the $p$-values increase towards $1$ \cite{POUNDS,HEARD}. A commonly assumed distribution for the $p$-value under the alternative hypothesis is the beta distribution \cite{POUNDS,HALME,HEARD}. Therefore, we set $f_1^{(b_k)}=\beta(1,b_k)$, i.e. $f_1^{(b_k)}(X)\propto(1-X)^{b_k-1},~X\in[0,1]$, and $f_1^{(b_k)}(X)=0,~X\notin[0,1]$, where $b_k$ is a parameter of the $k$th data stream probability density under the alternative hypothesis, $\forall k\in[K]$. For each sensor, We consider uncertainty in the value of the parameter $b_k$, where it is only known that $b_k\in[b_{\text{min}},b_{\text{max}}],~\forall k\in[K]$, and $b_{\text{min}},b_{\text{max}}$ are known. The true and unknown value of $b_k$ for each sensor is set by randomly choosing a number in the interval $[b_{\text{min}},b_{\text{max}}]$.\\
\indent
Due to the uncertainty in $f_1^{(b_k)},~k\in[K]$, we implement all the procedures in this example with a generalized LR (GLR), $L_G(X)=\frac{\underset{b\in[b_{\text{min}},b_{\text{max}}]}{\max}f_1^{(b)}(X)}{f_0(X)}$, instead of the actual LR, where we set $b_{\text{min}}=10$ and $b_{\text{max}}=20$. For each data stream, given the observation we compute the corresponding GLR and use its value instead of the unknown LR. The probability densities, $f_0$ and $f_1$ with $b=b_{\text{min}}=10$ and $b=b_{\text{max}}=20$, are depicted in Fig. \ref{pdf_Pvals}. It should be noted that since the true $f_1^{(b_k)},~k\in[K]$, is smaller than or equal to $\underset{b\in[b_{\text{min}},b_{\text{max}}]}{\max}f_1^{(b)}$, the true LR is smaller than the implemented GLR and thus, the resulting FDR may be higher than the predefined upper bound.\\
\indent
We perform similar simulations as in Subsection \ref{subsec:Gaussian distribution scenario}. For $K=10,100,200,500,1000$, we examine the FDR values of the proposed S-MAP and IS-MAP procedures with different proportions $\{q=0.05m\}_{m=1}^{20}$. The resulting minimum and maximum estimated FDR values of the S-MAP procedure are $0.034$ and $0.059$, respectively. The resulting minimum and maximum estimated FDR values of the IS-MAP procedure are $0.064$ and $0.102$, respectively. Consequently, due to the model uncertainty and the maximization of $f_1^{(b)}$ w.r.t. $b\in[b_{\text{min}},b_{\text{max}}]$, some of the resulting FDR values of the IS-MAP procedure are slightly higher than $\alpha=0.1$. This result demonstrates that since the S-MAP procedure is more conservative than the IS-MAP procedure in terms of FDR control then, the S-MAP procedure can be viewed as more robust than the IS-MAP procedure under the assumed model uncertainty.
\begin{remark}
In order to attempt to still maintain the FDR control of the IS-MAP procedure under the desired upper bound, we also implement it with $\rho_{\text{sim}}=0.005$, which is lower than the true value, $\rho=0.01$, under which the random change points are generated. As previously explained, in this case the FDR of the IS-MAP procedure will be lower at the expense of higher ADD. The resulting minimum and maximum estimated FDR values of the IS-MAP procedure are $0.035$ and $0.056$, respectively. Thus, all the IS-MAP estimated FDR values are below the predefined upper bound and FDR control is maintained. In addition, it can be seen that alternating the value of $\rho_{\text{sim}}$ compared to the true $\rho$ is a tool for controlling the tradeoff between FDR and ADD in case of model uncertainty.
\end{remark}
\indent
In Fig. \ref{ADD_Pvals}, we evaluate the ADD of the procedures: D-FDR, S-MAP with $q=0.5,1$, simple procedure with $q=0.5$, and IS-MAP with $q=0.5,1$ versus $K=10,100,200,500,1000$. It can be seen that under the model uncertainty, all the considered procedures still have an approximately constant ADD as $K$ increases, which is in accordance with the analytical results in Section \ref{sec:ADD analysis of the S-MAP and the IS-MAP procedures}. The parallel version of the IS-MAP procedure has the lowest ADD. In addition, the IS-MAP procedure with $q=0.5$ outperforms the parallel version of the S-MAP procedure and the D-FDR procedure, demonstrating the advantage of using the IS-MAP procedure rather than the S-MAP or the D-FDR procedures in terms of ADD. The simple procedure with $q=0.5$ have the highest ADD among the considered procedures. Thus, even under the model uncertainty, there is an advantage in terms of ADD in using the proposed MAP approach for choosing the monitored sensors rather than randomly choosing the subset of sensors to monitor. In Fig. \ref{ANO_Pvals}, we evaluate the ANO versus $K$ of the procedures: D-FDR, S-MAP with $q=0.5,1$, and IS-MAP with $q=0.5,1$. It can be seen that IS-MAP with $q=0.5$ has the lowest ANO, whereas the D-FDR has the highest one. In all the considered procedures, the ANO is approximately a constant w.r.t. $K$.\\
\indent
In the upper and middle plots of Fig. \ref{ADD_ANO_prop_tradeoff_Pvals}, we plot the ADDs and ANOs, respectively, of the S-MAP and the IS-MAP procedures for $K=1000$ versus the proportion values $\{q=0.05m\}_{m=1}^{20}$. It can be seen that for any of the considered proportions, the IS-MAP procedure has lower ADD and ANO than the S-MAP procedure. In addition, for both procedures the ADD decreases as the proportion increases, while the ANO increases as the proportion increases. Similar to the previous example, it can be seen that there is no significant increase in ADD when the proportion decreases from $q=1$ to $q=0.3$. The ANO increases significantly as $q$ increases towards $1$. In the lower plot of Fig. \ref{ADD_ANO_prop_tradeoff_Pvals}, we plot a curve connecting the ADD-ANO points of the S-MAP and the IS-MAP procedures from the upper and middle plots of Fig. \ref{ADD_ANO_prop_tradeoff_Pvals}. It can be seen that under the model uncertainty we still have a clear tradeoff between the ADD and ANO and the ADD decreases as the ANO increases.\\
\indent
In the upper plot of Fig. \ref{weighted_optimal_risk_Pvals}, we compare the weighted risks from \eqref{weighted_risk} of the S-MAP and the IS-MAP procedures with proportions $\{q=0.05m\}_{m=1}^{20}$ versus the proportion size for $c=0.2$. It can be seen that the weighted risk of the IS-MAP procedure is lower than the weighted risk of the S-MAP procedure. For both the S-MAP and the IS-MAP procedures, the best tradeoff among the considered proportions is achieved with the proportion $q=0.2$. Thus, under the model uncertainty, it is still not desirable to monitor all the active data streams in parallel, when both ADD and ANO are taken into account. In the lower plot of Fig. \ref{weighted_optimal_risk_Pvals}, for both the S-MAP and the IS-MAP procedures, we present the best proportion among the proportions $\{q=0.05m\}_{m=1}^{20}$ in terms of the weighted risk in \eqref{weighted_risk} versus the weighting coefficient $c$. Similarly to the previous example, for both procedures, as we increase $c$ the best proportion value decreases or does not increase. We also noticed a rapid decrease in the optimal proportion from $q=1$ to $q=0.25$, as we change $c$ from $0$ to $0.1$.

\begin{figure}[ht!]
\centering
\includegraphics[scale = 0.42]{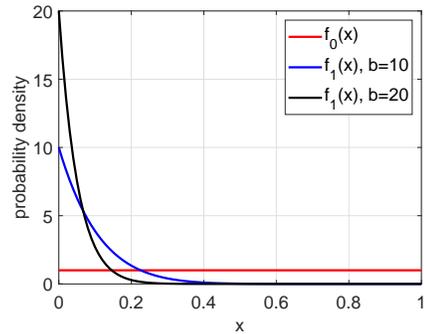}
\caption{General model with uncertainty and known $p$-values: Probability density $f_0 = U(0,1)$ that corresponds to the distribution of a $p$-value under the null hypothesis, $f_1=\beta(1,b),~b=b_{\text{min}}=10,b=b_{\text{max}}=20$. The beta distribution is a common assumption for a $p$-value under the alternative hypothesis. We assume that $b\in[b_{\text{min}},b_{\text{max}}]$ is unknown and depict the corresponding probability densities with the lowest possible value of $b$, $b_{\text{min}}=10$, and the highest possible value of $b$, $b_{\text{max}}=20$.}
\label{pdf_Pvals}
\end{figure}

\begin{figure}[ht!]
\centering
\includegraphics[scale = 0.42]{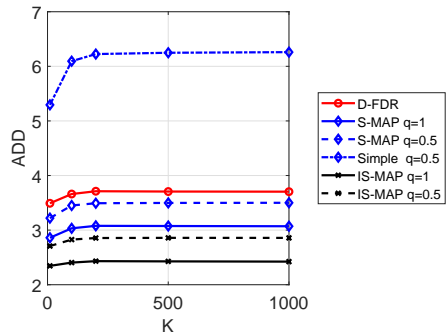}
\caption{General model with uncertainty and known $p$-values: The ADD of the procedures D-FDR, S-MAP with $q=0.5,1$, simple, and IS-MAP with $q=0.5,1$, versus $K=10,100,200,500,1000$. It can be seen that even under the model uncertainty all the considered procedures have an approximately constant ADD as $K$ increases. The parallel version of the IS-MAP procedure has the lowest ADD. Similarly to the previous example, the IS-MAP procedure with $q=0.5$ outperforms the parallel version of the S-MAP procedure and the D-FDR procedure. The simple procedure with $q=0.5$ has the highest ADD. Thus, even under model uncertainty, the MAP approach outperforms a random choice approach for choosing the subset of sensors to monitor.}
\label{ADD_Pvals}
\end{figure}

\begin{figure}[ht!]
\centering
\includegraphics[scale = 0.42]{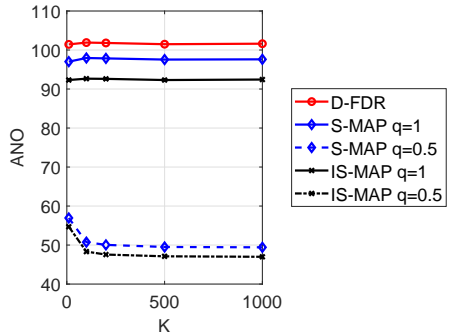}
\caption{General model with uncertainty and known $p$-values: The ANO of the procedures D-FDR, S-MAP with $q=0.5,1$, and IS-MAP with $q=0.5,1$, versus $K$. Under the model uncertainty, all the considered procedures still have an approximately constant ANO w.r.t. $K$, similarly to the previous example in which there is no model uncertainty. The IS-MAP with $q=0.5$ has the lowest ANO, while the D-FDR has the highest ANO.}
\label{ANO_Pvals}
\end{figure}

\begin{figure}[ht!]
\centering
\includegraphics[scale = 0.42]{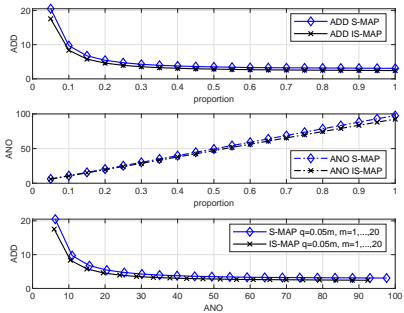}
\caption{General model with uncertainty and known $p$-values: (Top and middle) The ADDs and ANOs of the S-MAP and the IS-MAP procedures for $K=1000$ versus the proportion values $\{q=0.05m\}_{m=1}^{20}$. The IS-MAP procedure has lower ADD and ANO compared to the S-MAP procedure. For both procedures the ADD decreases as the proportion increases, while the ANO increases approximately linearly as the proportion increases. There is no significant increase in ADD when the proportion decreases from $q=1$ to $q=0.3$, while the decrease in ANO is more significant. (Bottom)  A curve connecting the ADD-ANO points of the S-MAP and the IS-MAP procedures from the upper and middle plots of this figure. Under the model uncertainty there is still a clear tradeoff between the ADD and ANO, i.e. as the ADD decreases the ANO increases.}
\label{ADD_ANO_prop_tradeoff_Pvals}
\end{figure}

\begin{figure}[ht!]
\centering
\includegraphics[scale = 0.42]{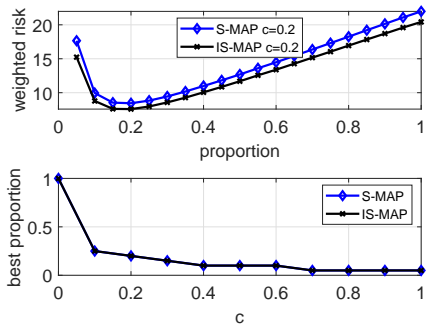}
\caption{General model with uncertainty and known $p$-values: (Top) The weighted risks of the S-MAP and the IS-MAP procedures with proportions $\{q=0.05m\}_{m=1}^{20}$ versus the proportion size for $c=0.2$. The weighted risk of the IS-MAP procedure is lower than the weighted risk of the S-MAP procedure. For both the S-MAP and the IS-MAP procedures, the best tradeoff among the considered proportions is achieved with the proportion $q=0.2$. Thus, in a similar manner to the previous example, when both the ADD and the ANO are taken into account it is not desirable to monitor all the active data streams in parallel and a lower weighted risk can be attained with a much lower proportion. (Bottom) The best proportion among the proportions $\{q=0.05m\}_{m=1}^{20}$ of the S-MAP and the IS-MAP procedures in terms of the weighted risk in \eqref{weighted_risk} is presented versus the weighting coefficient $c$. In this case, the S-MAP and the IS-MAP procedures have the same best proportions. Similar to previous example, as $c$ increases the best proportion value, among the considered proportions, decreases or does not increase. A rapid decrease is noticed in the optimal proportion from $q=1$ to $q=0.25$, as we change $c$ from $0$ to $0.1$.}
\label{weighted_optimal_risk_Pvals}
\end{figure}

\section{Conclusion}\label{sec:Conclusion}
In this paper, we developed methods for Bayesian multiple change-point detection in sensor network with limitations on the proportion of sensors that can be monitored in parallel. We proposed the S-MAP detection procedure in which observations are received only from a subset of sensors with highest posterior probabilities of change points having occurred, within the allowed proportion. In addition, we proposed an improved procedure named the IS-MAP procedure that requires lower thresholds than the S-MAP procedure and attains lower ADD and ANO. It has been shown that both the proposed procedures control the FDR at a predefined level and achieve an ADD that asymptotically remains a constant as the number of sensors in the network increases. The S-MAP procedure is more conservative than the IS-MAP procedure in terms of FDR control, and thus, in the FDR control sense, the S-MAP procedure is more robust to model uncertainty than the IS-MAP procedure. In the simulations, we have first considered i.i.d. Gaussian observations with a change in the mean and then we have considered a general model with some model uncertainty in which $p$-values from each sensor are used as observations to perform the change-points detection task. Our simulations in both cases show that the proposed S-MAP and IS-MAP procedures achieve a practically constant ADD as the number of sensors increases. The S-MAP procedure outperforms a corresponding simple procedure in terms of ADD demonstrating the benefit of the MAP approach compared to randomly choosing the subset of sensors to monitor. We have also used the S-MAP and IS-MAP procedures to study the tradeoff between ADD and ANO in multiple change-point detection. Under a joint weighted risk on the ADD and ANO with a positive weight on both figures of merit, we found that in all the considered cases observing all the data streams, i.e. setting $q=1$, does not provide the best tradeoff between the ADD and ANO. In fact, the best tradeoff can be obtained with proportion $q\ll1$, which implies that setting a small proportion, e.g. $q=0.3$, can significantly reduce the communication burden, i.e. the ANO, while maintaining a low ADD. A Topic for future research is the derivation of novel procedures with FDR control capabilities for non-parametric \cite{LAU_TAY_VEERAVALLI} multiple change-point detection under communication limitations.

\appendices

\section{Proof of Theorem \ref{T_FDR_IS_MAP}}\label{App_T_FDR_IS_MAP}
In this appendix, the FDR control of the IS-MAP procedure is proved. The number of change points declared, $R$, is known given the filtration of all the data, ${\mathcal{F}}_\infty$. Thus, using the law of total expectation, we can rewrite the FDR from \eqref{FDR_define} as
\be\label{FDR_condition}
{\text{FDR}}={\rm{E}}\bigg[\frac{{\rm{E}}[V|{\mathcal{F}}_\infty]}{\max\{ R,1\}}\bigg].
\ee
Recall that $V$ is the number of false discoveries, i.e. the size of the subset of $[K]$ s.t. $T^{(k)}<t^{(k)}$. Thus, $V$ can be written as
\be\label{V_with_indicators}
V=\sum_{k=1}^{K}\onevec_{\{t^{(k)}>T^{(k)}\}},
\ee
where $\onevec_{A}$ is the indicator function of the event $A$. By substituting \eqref{V_with_indicators} in \eqref{FDR_condition} and using the linearity of the expectation operator, we obtain
\be\label{FDR_rewriting}
\begin{split}
{\text{FDR}}&={\rm{E}}\bigg[\frac{\sum_{k=1}^{K}{\rm{E}}[\onevec_{t^{(k)}>T^{(k)}}|{\mathcal{F}}_\infty]}{\max\{R,1\}}\bigg]\\
&={\rm{E}}\bigg[\frac{\sum_{k=1}^{K}{\rm{E}}[\onevec_{t^{(k)}>T^{(k)}}|{\mathcal{F}}_{T^{(k)}}]}{\max\{ R,1\}}\bigg],
\end{split}
\ee	
where the second equality is obtained since the stopping times, $\{T^{(k)}\}_{k\in[K]}$, are known given ${\mathcal{F}}_\infty$ and for $T^{(k)}<\infty$ we stop observing the $k$th data stream after $T^{(k)}$, i.e. after change point declaration for the $k$th data stream. Rewriting the expected indicator functions in \eqref{FDR_rewriting} as conditional probabilities, we obtain
\be\label{FDR_rewriting_next}
\begin{split}
{\text{FDR}}&={\rm{E}}\bigg[\frac{\sum_{k=1}^{K}(1-P(t^{(k)}\leq T^{(k)}|{\mathcal{F}}_{T^{(k)}}))}{\max\{ R,1\}}\bigg]\\
&={\rm{E}}\bigg[\frac{\sum_{k=1}^{K}(1-\pi_{T^{(k)}}^{(k)})}{\max\{ R,1\}}\bigg],
\end{split}
\ee	
where the second equality is obtained by substituting \eqref{posterior_prob} into the first equality. In case $T^{(k)}=\infty$, then $\pi_{T^{(k)}}^{(k)}=1$ and thus, 
\be\label{expected_event_k_inf}
1-\pi_{T^{(k)}}^{(k)}= 0,~\forall k\in[K],~{\text{s.t.}}~T^{(k)}=\infty.
\ee
On the other hand, in case $T^{(k)}<\infty$ then, at time slot $T^{(k)}$ the event $\{\pi_{T^{(k)}}^{(k)}\geq Q\}$ occurs. Consequently,
\be\label{expected_event_k}
1-\pi_{T^{(k)}}^{(k)}\leq 1-Q = \alpha,~\forall k\in[K],~{\text{s.t.}}~T^{(k)}<\infty,
\ee
where the last equality is obtained by substituting $Q=1-\alpha$ from \eqref{thresholds_eq_alpha} into the term $1-Q$. The term $R$ is the cardinality of the subset of $[K]$ s.t. $T^{(k)}<\infty$. Thus, by substituting \eqref{expected_event_k_inf}-\eqref{expected_event_k} in \eqref{FDR_rewriting_next}, one obtains
\be\label{FDR_rewriting_final}
{\text{FDR}}\leq{\rm{E}}\bigg[\frac{\max\{ R,1\}\alpha}{\max\{ R,1\}}\bigg]=\alpha.
\ee	

\section{Proof of Theorem \ref{bounds_ADD}}\label{App_bounds_ADD}
In this appendix, we derive asymptotic lower and upper bounds on the ADD of the S-MAP and the IS-MAP procedure. For any data stream, the lowest possible threshold of the S-MAP procedure from \eqref{high_thresholds} is $Q_K=1-\alpha$, i.e. change point cannot be declared before the posterior probability is higher than or equal to $1-\alpha$. Thus, from \eqref{single_lower_bound}
\be\label{single_lower_bound_k}
{\text{ADD}}_{\text{S-MAP},k}\geq\frac{|\log\alpha|}{D(f_1||f_0)+|\log(1-\rho)|}(1+o_\alpha(1)),
\ee
$\forall k\in[K]$. It can be seen that the asymptotic lower bound in \eqref{single_lower_bound_k} is independent of $k$. Thus, by substituting \eqref{single_lower_bound_k} in \eqref{ADD}, we obtain \eqref{asympt_lower_bound_S_MAP}.\\
\indent
According to the S-MAP procedure we can find a threshold for the $k$th data stream, $Q_{r_k}=1-\frac{r_k\alpha}{K},~r_k\in[K]$, which is different from the thresholds of the other data streams. For this threshold, the change of the $k$th data stream is declared at the first time slot in which this threshold is exceeded or even before this threshold is exceeded. Thus, from \eqref{single_upper_bound},
\be\label{single_upper_bound_k}
{\text{ADD}}_{\text{S-MAP},k}\leq\frac{\log\frac{K}{r_k\alpha}}{|\log(1-\rho)|}(1+o_\alpha(1)),\forall k\in[K].
\ee
By substituting \eqref{single_upper_bound_k} in \eqref{ADD}, we obtain
\be\label{ADD_upper_bound}
{\text{ADD}}_{\text{S-MAP}}\leq\left(\frac{1}{K}\sum_{k=1}^{K}\frac{\log\frac{K}{r_k\alpha}}{|\log(1-\rho)|}\right)(1+o_\alpha(1)).
\ee
Since the thresholds are different, we obtain 
\be\label{thresholds_permutation}
\sum_{k=1}^{K}\log r_k=\sum_{k=1}^{K}\log k=\log K!.
\ee
By substituting \eqref{thresholds_permutation} into \eqref{ADD_upper_bound} and reordering, we obtain \eqref{asympt_upper_bound_S_MAP}.\\
\indent
In the IS-MAP procedure, for any data stream the threshold is $Q=1-\alpha$ from \eqref{thresholds_eq_alpha}. Thus, using \eqref{single_lower_bound} and \eqref{single_upper_bound}, we obtain
\be\label{single_lower_bound_k_D}
{\text{ADD}}_{\text{IS-MAP},k}\geq\frac{|\log\alpha|}{D(f_1||f_0)+|\log(1-\rho)|}(1+o_\alpha(1))
\ee
and
\be\label{single_upper_bound_k_D}
{\text{ADD}}_{\text{IS-MAP},k}\leq\frac{|\log\alpha|}{|\log(1-\rho)|}(1+o_\alpha(1)),
\ee
respectively, $\forall k\in[K]$. The asymptotic lower and upper bounds in \eqref{single_lower_bound_k_D} and \eqref{single_upper_bound_k_D}, respectively, are independent of $k$ and thus, by substituting \eqref{single_lower_bound_k_D} and \eqref{single_upper_bound_k_D} in \eqref{ADD}, we obtain \eqref{asympt_lower_bound_IS_MAP} and \eqref{asympt_upper_bound_IS_MAP}, respectively.

\section{Proof of Proposition \ref{single_ADD}}\label{App_single_ADD}
In this appendix, we derive the asymptotic ADD upper bound from \eqref{single_upper_bound_assumption} under the assumption that \eqref{bounded_interval_define}-\eqref{convergent_expectation_mean} are satisfied. Using the definition of $\gamma$ from \eqref{single_change_point_ignore}, it can be seen that the prior distribution of $\gamma\in\mathbb{N}$ is
\be\label{prior_prob_prop}
\begin{split}
P(\gamma=m)&=P(V_{m-1} <t\leq V_m)\\
&=P(t\leq V_m)-P(t\leq V_{m-1}).
\end{split}
\ee
Under the geometric prior assumption on $t$ we obtain
\be\label{geom_property}
P(t\leq m)=1-(1-\rho)^m,m\in\mathbb{N}.
\ee
By substituting \eqref{geom_property} in \eqref{prior_prob_prop}, one obtains
\be\label{prior_prob_prop_next}
P(\gamma=m)=(1-\rho)^{V_{m-1}}-(1-\rho)^{V_{m}}.
\ee
Using \eqref{prior_prob_prop_next}, we obtain
\be\label{general_exponential_tail}
\begin{split}
\underset{m\to\infty}{\lim}\frac{-\log P(\gamma\geq m+1)}{m}&=\underset{m\to\infty}{\lim}\frac{-\log ((1-\rho)^{V_{m}})}{m}\\
&=\bigg(\underset{m\to\infty}{\lim}\frac{V_m}{m}\bigg)|\log(1-\rho)|\\
&=\zeta|\log(1-\rho)|,
\end{split}
\ee
where the third equality is obtained by substituting \eqref{arithmetic_mean} and \eqref{convergent_interval_mean} into the second equality. Using the definition of $\gamma$ from \eqref{single_change_point_ignore}, we obtain that on $\{\gamma=n\}$
\be\label{KLD_converge}
\underset{N\to\infty}{\lim}\frac{1}{N}\sum_{i=n}^{n+N-1}\log L(X_{V_i})=D(f_1||f_0).
\ee
From the definitions of the stopping rule, $\Gamma$, and the change point, $\gamma$, in \eqref{single_stop_rule_ignore} and \eqref{single_change_point_ignore}, respectively, and from \eqref{general_exponential_tail} and \eqref{KLD_converge}, it can be seen that the detection of $\gamma$ using $\Gamma$ based on the sequence $\{X_{V_n}\}_{n=1}^{\infty}$ is a Bayesian change-point detection procedure that satisfies the conditions of Theorem 3 in \cite{TARTAKOVSKY_GENERAL}. Thus, using this Theorem, we obtain the following asymptotic upper bound on the ADD of $\Gamma$:
\be\label{ADD_ignore_procedure}
{\rm{E}}[\max\{0,\Gamma-\gamma\}]\leq\frac{|\log\eta|}{D(f_1||f_0)+\zeta|\log(1-\rho)|}(1+o_\eta(1)).
\ee
Next, we consider the stopping rule
\be\label{single_stop_rule_star}
T^*=\inf\{V_n,n\in\mathbb{N}:\pi_{V_n}\geq 1-\eta\}=V_\Gamma.
\ee 
In a similar manner to $T$, the stopping rule $T^*$ uses the posterior update from \eqref{recursive_update}, but can only take values from the subsequence $\{V_n\}_{n=1}^\infty$ rather than $\mathbb{N}$. Therefore, $T\leq T^*$ and consequently
\be\label{add_diff_develop}
T-t\leq T^*-t=V_\Gamma-V_\gamma+V_\gamma-t,
\ee
where the equality follows from \eqref{single_stop_rule_star}. From \eqref{interval_define} and \eqref{single_change_point_ignore} we obtain that
\be\label{gamma_little_property}
V_\gamma-t\leq \zeta_\gamma-1.
\ee
In addition, using \eqref{arithmetic_mean} we can write
\be\label{V_Gamma_eq}
V_\Gamma=\Gamma\zeta^{(\Gamma)}~\text{and}~V_\gamma=\gamma\zeta^{(\gamma)}.
\ee
By substituting \eqref{gamma_little_property}-\eqref{V_Gamma_eq} into the right hand side of \eqref{add_diff_develop}, one obtains
\be\label{add_diff_sub_gamma_ver8}
\begin{split}
T-t&\leq \zeta^{(\Gamma)}(\Gamma-\gamma)+\gamma(\zeta^{(\Gamma)}-\zeta^{(\gamma)})+\zeta_\gamma-1\\
&\leq \zeta^{(\Gamma)}(\Gamma-\gamma)+|\gamma(\zeta^{(\Gamma)}-\zeta^{(\gamma)})+\zeta_\gamma-1|.
\end{split}
\ee
Using \eqref{interval_define} and \eqref{bounded_interval_define}, we obtain
\be\label{arithmetic_mean_app}
1\leq\zeta^{(N)}\leq{\mathcal{B}},~\forall N\in{\mathbb{N}}.
\ee
Substituting \eqref{bounded_interval_define} and \eqref{arithmetic_mean_app} in \eqref{add_diff_sub_gamma_ver8}, one obtains
\be\label{add_diff_sub_gamma}
T-t\leq \zeta^{(\Gamma)}(\Gamma-\gamma)+\gamma({\mathcal{B}}-1)+{\mathcal{B}}-1.
\ee
From \eqref{prior_prob_prop_next}, it can be verified that
\be\label{expected_prior_prob_prop}
{\rm{E}}[\gamma]\leq\frac{1}{\rho}.
\ee
By using \eqref{convergent_expectation_mean}, \eqref{ADD_ignore_procedure}, \eqref{add_diff_sub_gamma} and \eqref{expected_prior_prob_prop}, we obtain that the ADD of $T$ satisfies
\be\label{single_upper_bound_zeta}
{\text{ADD}}\leq\frac{\zeta|\log\eta|}{D(f_1||f_0)+\zeta|\log(1-\rho)|}(1+o_\eta(1))
\ee
and consequently \eqref{single_upper_bound_assumption} is obtained.


\bibliographystyle{IEEEtran}
\bibliography{refs}

\end{document}